\theoremstyle{plain}
\newtheorem{thm}{Theorem}
\newtheorem{lem}[thm]{Lemma}
\newtheorem{prop}[thm]{Proposition}
\newtheorem{cor}[thm]{Corollary}
\theoremstyle{definition}
\newtheorem*{dfn}{Definition}
\newcommand \GMk{GM$k$\xspace}
\newcommand \Mk{M$k$\xspace}
\newcommand \Mkv{M$k$v\xspace}
\newcommand \GM{GM\xspace}
\newcommand \GMtwo{GM2\xspace}
\newcommand \GMp{$\text{GM$k$}_{\text{pars-inf}}$\xspace}
\newcommand \GMv{{GM$k$v}\xspace}
\newcommand \GMtwop{$\text{GM2}_{\text{pars-inf}}$\xspace}
\newcommand \GMtwov{$\text{GM2v}$\xspace}
\newcommand \Mtwop{$\text{M2}_{\text{pars-inf}}$\xspace}
\newcommand \CFN{$\text{CFN}$\xspace}
\newcommand \Mkp{$\text{M$k$}_{\text{pars-inf}}$\xspace}
\newcommand \diag{\operatorname{diag}}
\newcommand \proj{\operatorname{proj}}
\newcommand \im{\operatorname{im}}
\newcommand \dist{\operatorname{dist}}
\newcommand{\MkVarNS}{M$k$v}
\newcommand{\MkVar}{{\MkVarNS} }
\newcommand{\editcomment}[2]
          {\ifthenelse{\boolean{showeditcomments}}
            {\textcolor{red}{[\textbf{#1:} \emph{#2}]}}
            {}
          }
\begin{document}

\begin{frontmatter}

\title{Estimating Trees from Filtered Data: Identifiability of Models for Morphological Phylogenetics}

\author[UAF]{Elizabeth S. Allman}
\author[MTH]{Mark T. Holder}
\author[UAF]{John A. Rhodes}
\address[UAF]{Department of Mathematics and Statistics, University of Alaska Fairbanks, PO Box 756660, Fairbanks, AK 99775, U.S.A}
\address[MTH]{Department of Ecology and Evolutionary Biology, University of Kansas, 1200 Sunnyside Avenue, Lawrence, Kansas 66045, U.S.A}

\begin{abstract}
As an alternative to parsimony analyses, stochastic models have been proposed \citep{Lewis2001,NylanderRHN2004} 
for morphological characters, so that maximum likelihood or Bayesian analyses may be used for phylogenetic inference. 
A key feature of these models is that they account for ascertainment 
bias, in that only varying, or parsimony-informative characters are observed. 
However, statistical consistency of such model-based inference requires that the model parameters be identifiable from the joint distribution they entail, and this issue has not been addressed.

Here we prove 
that  parameters for several such models, with finite state spaces of arbitrary size, are identifiable, provided
the tree has at least $8$ leaves.  If the tree topology is already known, then $7$ leaves suffice for identifiability of the numerical parameters.  The 
method of proof involves first inferring a full distribution of both parsimony-informative
and non-informative pattern joint probabilities from the parsimony-informative
ones, using phylogenetic invariants.  The failure of identifiability of the tree parameter for $4$-taxon trees
is also investigated.
\end{abstract}

\begin{keyword} maximum likelihood \sep morphology \sep parsimony-informative \sep Mkv Model
\MSC[2008] 92D15 \sep 60J20 \sep 62M99
\end{keyword}

\end{frontmatter}

\section{Introduction}
Currently, the vast majority of phylogenetic inference from morphological data is based on the parsimony criterion -- the preference for the phylogeny that can explain the data with the fewest number of changes in character states.
\citet{Lewis2001} discussed the obstacles to the application of Markov models for phylogenetic inference to discrete morphological data.  
He argued that, despite its limitations, the simplest continuous-time Markov model offers advantages over relying solely on parsimony.
He referred to this model as the \Mk model, for ``Markov'' with $k$-states.
The \Mk model is a generalization of some of the earliest models used in phylogenetics by \citet{JukesC1969}, \citet{Neyman}, \citet{Farris1973}, and \citet{Cavender1978}; it assumes that all states have the same frequency and all transitions between different states occur at the same rate.
Maximum likelihood (ML) inference under the \Mk model should be able to infer trees more accurately than parsimony, because use of the \Mk model allows one to take into account that some branches on the tree may be longer than others.
This branch length heterogeneity could arise from differences in the temporal duration of branches, differences in the rate of character evolution, or both.
Parsimony does not attempt to correct for the fact that convergent changes may not be equally likely to occur on every branch of the tree; as a result, it has been shown to be susceptible to ``long-branch attraction'' when branch-length heterogeneity is present \citep{Felsenstein1978b}.

As \citet{Lewis2001} pointed out, complications arise when applying the \Mk model to  
morphological characters.
The definitions of both characters (also referred to as ``transformation series'') and character states are problematic in morphological systematics.
Systematists disagree about the most appropriate meanings of concepts such as homology \citep[{\em cf.}\;][]{Sereno2007,RieppelK2007,Wiley2008} which are crucial to character coding.
Even if one chooses a particular definition of homology, doing so does not establish a clear-cut set of rules for ``atomizing'' the complete morphology of an organism to a set of characters that can be treated as 
independent instances of a general model of character evolution \citep[but see][for one example of an attempt to create an objective system for character coding]{Ramirez2007}.
While there are rarely clear criteria for delimiting character states or identifying homologous traits between species, systematists 
try to find traits that can be cleanly scored into one of a few discrete bins.
For example, the degree contact of bones in the skull could be scored as a $2$-state character with state 0 representing ``not touching'' and state 1 representing ``in contact.''
If these characters are heritable and do not change too quickly over evolutionary time, then even simple, discrete-state coding scheme can provide information about evolutionary relationships.
Of particular concern here is the fact that variation across the taxa under investigation is vital to the process of recognizing characters and character states.
This means that it is not appropriate to view the coded character matrix as a random sampling of characters
generated by the evolutionary process, since it is biased to contain characters thought to be phylogenetically useful. Such \emph{ascertainment bias}, in which the collected data fails to be representative of the entire population of characters, must be corrected for in a valid statistical analysis.

It is difficult to precisely describe the biases inherent in the process of the coding of morphological traits into columns of a data matrix.
For one thing, it is relatively rare for systematists to even report their methods for excluding potential characters from consideration; \citet{PoeW2000} found that fewer than 20\% of papers in morphological systematics reported such criteria.
The requirement that there be variability among the taxa of interest is clearly one important aspect of character coding.
As noted by \citet{Sereno2007}, several definitions of ``character'' or ``homology'' given by systematists include the idea that a
character differentiates between taxa.  For example he pointed out the following definitions of ``character'' in systematics:
\begin{compactitem}
\item ``Any attribute of an organism or a group of organisms by which it differs from an organism belonging to a different category or resembles an organism of the same category'' \citep[p.~315,][]{MayrLU1953}
\item ``We will call those peculiarities that distinguish a semaphoront (or a group of semaphoronts) from 
other semaphoronts `characters'$\ldots$''\citep[p.~7,][]{Hennig1966} 
\item ``an observation that captures distinguishing peculiarities among organisms $\ldots$'' \citep{RieppelK2002}
\end{compactitem}
The emphasis on variability among taxa means that constant characters generally do not appear in morphological character matrices.
When they do occur, it is often the result of pruning the list of taxa (the characters had been chosen because of variation among members of a larger set of taxa).
As \citet{Lewis2001} noted, this bias cannot be corrected by morphologists changing their systems for encoding characters.
How many constant characters should be encoded to represent a complex feature that is identical across a set of taxa?
The question seems intractable because there are no strict rules about how many aspects of a trait should be coded as independent characters.

The absence of constant characters is the most obvious effect of the ascertainment biases in the coding of morphological data matrices.
\citet{Lewis2001} proposed a corrected model, \MkVarNS, which can be used to calculate a likelihood from a data set conditional on the fact that only variable characters are sampled \citep[see also][]{Felsenstein1992}.

\citet{NylanderRHN2004} further noted that many morphological matrices only contain parsimony-informative characters.
A parsimony-informative character is one which does not have the same parsimony score on every tree.
In order for a character to be parsimony-informative, it must have more than one character state that is shared by multiple taxa.
For instance, if a character `wing shape' for a collection of insects has 3 states, and there is exactly one taxon with shape 2 and one with shape 3, with all others having shape 1, then the character is variable, but not parsimony-informative. If, one the other hand, at least two taxa have shape 1 and at least 2 taxa have shape 2, then the character is parsimony-informative.

If parsimony-noninformative characters are avoided in the process of character coding, then one has data from a smaller set of characters than for the \MkVar model and should 
condition the likelihood calculations based on the parsimony-informative ascertainment bias.
We will refer to this model as \Mkp.  
It was introduced by \citet{NylanderRHN2004}, and implemented 
in the freely available software, MrBayes \citep{RonquistH2003}.

The use of the \MkVar and \Mkp model in phylogenetic inference has grown steadily. 
Dozens of studies using these models have now been published.
Unfortunately, in many cases, authors do not report which form of conditioning is used during analyses, so it is impossible to ascertain the relative frequency of the \MkVar model compared to the \Mkp model.

\medskip

The statistical inconsistency of parsimony as an estimator of phylogenetic trees was one of Lewis's (2001) primary reasons
for proposing the \MkVar model as an improved basis of inference.
However, \citet{Lewis2001} did not prove that ML inference using the \MkVar model is a consistent estimator of the phylogeny.
Nor did \citet{NylanderRHN2004} prove the consistency of ML inference under the \Mkp model.

Recall that statistical consistency of a method of inference under a model means that if data is generated according to the model, then as the amount of data grows, the probability of inferring the correct model parameters (e.g.,  the tree topology and numerical parameters such as edge lengths) approaches 1. There is a standard approach to proving consistency under maximum likelihood \citep{Wald49} that reduces the issue to proving the model has identifiable parameters. This means the crucial step is to show that any two different choices of model parameters lead to a different distribution of data. Identifiability of model parameters is equally essential for their inference in Bayesian analyses, and non-identifiability of parameters that are not the focus of such an analysis may also be problematic \citep{Rannala2002}.

The identifiability of the  \Mk model can be proved through arguments based on an appropriate generalization of the Jukes-Cantor distance. However, the question that we
address is whether we can identify the tree and model parameters when the data is filtered to contain only
variable patterns, or only parsimony-informative patterns. This filtering greatly changes the problem, so that a straightforward modification of the proof for \Mk to \Mkp fails.  One instance of the question of tree identifiability using only parsimony-informative pattern frequencies 
was investigated by \cite{ParsCons}. Although that paper is focused on other issues, in the appendix
it is shown that for the \CFN model on 4-taxon trees, several explicit choices of edge lengths
on different tree topologies can lead to identical distributions of parsimony-informative (and constant)
patterns.

\medskip

Here we demonstrate that the tree topology is identifiable under the \MkVar model under sufficiently broad circumstances to justify its use in data analysis.
While we show that under the \Mkp model, $k\ge 2$, the tree topology is not identifiable for 4-taxon trees, more importantly we establish that the tree topology is identifiable when eight or more taxa are involved.  Moreover, 
if the tree is known, then the branch lengths are identifiable on trees of seven or more taxa.
(The need for seven or eight taxa in these statements may be an artifact of our methods; we do not fully analyze the cases of trees with five, six, or seven taxa.)

Our results are actually valid for more general models, the variable-patterns-only and parsimony-informative-patterns-only versions of the $k$-state general Markov model \GMk, a generalization of the \Mk model in which the transition
probabilities on edges are not constrained to be equal among the different states. 
The identifiability of the tree topology for the unfiltered \GMk  was proven by \citet{Steel1994}, and the identifiability of all numerical parameters for this model by \citet{MR97k:92011}. Identifiability results for a 2-class mixture model of \GMk with invariable sites (GM$k$+I) were investigated by \citet{AllmanR2008}. That model is quite closely related to a variable-patterns-only model, as the invariable class essentially makes direct observation of constant patterns from variable characters impossible. Our results here in fact imply strengthenings of some of the theorems in \citet{AllmanR2008}. Furthermore, since the general Markov model includes as submodels the general time-reversible models with fixed rate-matrices describing the substitution process across the tree, our results apply to the filtered versions of those models as well. 

\smallskip

Interestingly, one implication of this work and that of \citet{ParsCons} which seems not to have been widely noticed, is that the most basic example used to explain phylogenetic inference to students
is actually an example of an intractable problem.
Assuming any model encompassing the \Mk model underlies the data, if we attempt to infer an unrooted four-leaf tree using only those characters that are parsimony-informative, then 
no method of inference can consistently identify the correct tree, even if  given an infinite sample of characters. We establish that each of the three possible binary tree topologies can lead to all possible positive distributions of parsimony-informative patterns, thus strengthening the result of \cite{ParsCons}.

We emphasize that the non-identifiability in this case is not an argument for ignoring the ascertainment bias.
If characters are filtered to contain only parsimony-informative patterns and the ascertainment bias is ignored then
inference can be positively misleading in the sense that \citet{Felsenstein1978b} used the phrase -- the incorrect tree can be preferred with increasing support as the number of characters increases. Indeed, using standard software to perform a maximum likelihood analysis of filtered 4-taxon data under the misspecified \Mk model often results in the erroneous inference of a particular tree topology.
While maximum likelihood inference under the correctly-specified \Mkp model does not prefer any tree topology, it will at least not lead to rejection of the true tree (except when some parsimony-informative patterns do not occur, due to sampling error).

We also note that the \MkVar and \Mkp models may be appropriate in contexts outside of morphological systematics.  
For example, one (admittedly flawed) method for incorporating information from insertion/deletion events (indels) in a molecular sequence analysis is to code the absence or presence of a base as a 0/1 character.  
Because columns without indels are generally not coded, and columns in which all taxa lack a nucleotide are impossible to correctly code, such binary characters should be analyzed under a model that conditions on the variability of the characters. (More appropriate ways of modeling indels  are discussed by \citet{Thorne1991} and \citet{Diallo2007}.) In a similar vein, one of the models included in our analysis, \GMtwov, the variable-patterns-only version of the model GM2, has recently been used for a likelihood analysis of intron loss and gain by \citet{Csuros2007}.

Finally, we emphasize that while establishing identifiability of parameters for a model is essential for its use in statistical inference, there are other important issues that we do not address in this work. In particular, \emph{efficiency} concerns how many characters are needed for inference by a particular method such as maximum likelihood to perform well, and \emph{robustness} concerns how well the method performs on data deviating from the assumed model. Even for unfiltered phylogenetic models these questions have mainly been investigated by simulation, rather than theoretically.

\section{Parsimony-informative models}

The models of sequence evolution we consider are submodels of the general Markov model, with observations restricted to variable or parsimony-informative patterns. In this section, we make this more precise.

By an \emph{$n$-taxon tree $T$}, we mean an unrooted, $n$-leaf, topological phylogenetic tree, with leaves labeled by the taxa $a_i$, $i\in [n]$. We do not assume the tree $T$ is binary; it need not be fully resolved. However, we do assume $T$ has no internal nodes of valence $2$.

The $k$-state \emph{general Markov substitution model}, \GMk, on $T$ is parameterized as follows: First, arbitrarily choose some node of $T$ to be the root. Designating
character states by elements of $[k]=\{1,2,\dots,k\}$ , a row vector $\boldsymbol \pi=(\pi_i)\in[0,1]^k$, with entries summing to 1,  gives  
probabilities of each state $i\in[k]$ occurring at the root. On each edge $e$ of $T$, directed away from the root,
a $k\times k$ Markov matrix $M_e$, with rows summing to 1, gives conditional probabilities of each possible state change occurring on that edge.  We refer to the entries of $\boldsymbol \pi$ and the $M_e$ as the \emph{numerical parameters} of \GMk, in contrast to the \emph{tree parameter} $T$, which is non-numerical.

Throughout, we assume that
\begin{enumerate}
\item[(1)] all entries of $\boldsymbol \pi$ and the $M_e$ are strictly positive, and
\item[(2)]  all $M_e$ are non-singular.
\end{enumerate}
Condition (1) is a biologically natural one, implying that all states and all state transitions can occur. It also ensures that, the probability distribution arising for one choice of the root of $T$ and numerical parameters is identical to one for any other choice of the root, with a corresponding appropriate choice of numerical parameters that are unique up to permutations of character states at internal nodes of $T$ \citep{SSH94,AR03}. This means that identifiability of numerical parameters for \GMk can only be claimed up to the arbitrary choices of the root and orderings of states at internal nodes. Condition (2), which when restricted to continuous-time models is just the requirement that edge lengths be finite, is needed to avoid other sources of non-identifiability (such as a situation in
which all terminal edges have infinite length, so that no information about internal tree structure is
retained in the joint distribution).

\smallskip
Following \citet{Lewis2001}, we use \Mk to denote the submodel of \GMk which assumes a uniform root distribution, $\boldsymbol \pi=(1/k,\dots,1/k)$, and that for each Markov matrix all off-diagonal entries are equal. Thus M4 is also known as the Jukes-Cantor (JC) model, while M2 is the Cavender-Farris-Neyman (\CFN) model. While \cite{Lewis2001}
presents a continuous-time formulation of this model, that is equivalent  to the submodel of the one given here
by making an additional assumption that off-diagonal matrix entries are smaller than diagonal entries.
As our methods are primarily algebraic, we do not focus on the continuous-time formulation.

For the \GMk (or \Mk) model on a fixed $n$-taxon tree $T$, the joint probability distribution of character states at the leaves of
the $T$ can be expressed by polynomial formulas in the entries of $\boldsymbol \pi$ and the $M_e$.
Denote a \emph{pattern} of states at the leaves of a tree by
a vector $\mathbf i =(i_1,i_2,\dots,i_n)=i_1i_2\dots i_n\in[k]^n$,
where the leaf labeled by taxon $a_j$ displays state $i_j$. We use $p_{\mathbf i}$ to denote the probability of observing pattern $\mathbf i$ that arises from a specific model, tree, and numerical parameters.

\medskip

We wish to modify the above models to describe data that is collected only on parsimony-informative patterns. We will not explicitly treat a variable-patterns-only model, as the necessary modifications are straightforward.

Denote the set of  \emph{parsimony-informative patterns} by
\begin{equation*}
\mathcal I=\{ \mathbf i={i_1i_2i_3\cdots i_n} ~|~ i_{j_1}=i_{j_2}\ne i_{j_3}=i_{j_4},\ \text{for some distinct $j_l$}\}. 
\end{equation*}
For fixed $k$, the total number of patterns grows exponentially with $n$, while the number of parsimony-noninformative ones grow only polynomially. Thus the cardinality of $\mathcal I$ grows exponentially with $n$.

Suppose that
from a total
number of $M$ independent, identically distributed characters described by the \GMk model, we may obtain only data counts $n_{\mathbf i}$ for those patterns $\mathbf i\in\mathcal I$.
Since assumption (1) implies $p_{\mathbf i}>0$ for all $\mathbf i$, we have that
$$\mathcal P(n_\mathbf i>0\text{ for some }\mathbf i\in \mathcal I)\to 1\text{ as }M\to\infty.$$
With $N=\sum_{\mathbf i\in \mathcal I} n_{\mathbf i}$, the total count of observed characters, then $N\le M$ and
$\mathcal P(N>0\text)\to 1$ as $M\to \infty$.

If we were able to observe all patterns, including parsimony-noninformative ones, then observed pattern frequencies would be
$\hat p_{\mathbf i}= {n_{\mathbf i}}/M,$ which, by the strong law of large numbers converges to
$ p_{\mathbf i}$
almost surely as $M\to \infty$. However, since $M$ is unknown from data, we cannot compute $\hat p_{\mathbf i}$ directly  for $\mathbf i\in \mathcal I$. We instead introduce the
observed frequencies $$\hat q_{\mathbf i}= \frac {n_{\mathbf i}}N.$$ These are estimators for
conditional probabilities,  $q_{\mathbf i}$, that one observes $\mathbf i$ given that a parsimony-informative pattern is observed. Thus
\begin{equation} q_{\mathbf i}=\mathcal P(\mathbf i ~|~\text{pattern is parsimony-informative})= \frac{ p_{\mathbf i} }p,\label{eq:qi}\end{equation}
where $$p=\mathcal  P(\text{pattern is parsimony-informative})=\sum_{\mathbf i \in \mathcal I } p_{\mathbf i}.$$  Note that
$\hat q_\mathbf i\to q_{\mathbf i}$ almost surely as $M\to \infty$.

We thus define a parameterized model,  \GMp, which gives values of the $q_{\mathbf i}$, $\mathbf i\in \mathcal I$, as a function of the usual \GMk parameters. For any fixed tree, explicit formulas for the $q_{\mathbf i}$ as rational functions of the numerical model parameters are easily obtained.
Restricting to any submodel of \GMk, we similarly obtain a parsimony-informative version of the submodel. For instance, \Mkp denotes the model describing the restriction of observations of the \Mk
model to parsimony-informative patterns.

Similarly, one can define parameterized models \GMv and  \Mkv in which the non-constant patterns can be observed, by conditioning on the variableness of patterns rather than their parsimony-informativeness.
\section{Results}\label{sec:results}

As mentioned,  \cite{ParsCons} showed that from parsimony-informative patterns alone the tree topology is not identifiable for the \CFN (\emph{i.e.}, M2) model on a 4-taxon tree, at least for certain parameter choices. We begin by extending this negative result to models with more character states, and to the full parameter space.

Consider the model \Mkp on a 4-leaf tree $a_1a_2|a_3a_4$. Since there are $3k(k-1)$ parsimony-informative patterns for the $k$-state model, a probability distribution arising from this model is represented by a vector of $3k(k-1)$ probabilities. However,  these vector entries are all the same for patterns of the same form (i.e,  $q_{xxyy}$ is the same for all choices of distinct states $x$ and $y$, etc.). Thus the distribution can be represented by a vector
$$\vec Q=(Q_{xxyy}, Q_{xyyx}, Q_{xyxy}),$$
where $Q_{xxyy}=k(k-1)q_{xxyy}$, etc., so that 
$$Q_{xxyy}+ Q_{xyxy}+ Q_{xyyx}=1.$$
In 3-space,  $\vec Q$ lies on the part of the plane $x+y+z=1$ in the non-negative octant. This set, the \emph{probability simplex} $\Delta$, is an equilateral triangular patch, with corners $(1,0,0)$, $(0,1,0)$, and $(0,0,1)$.

\begin{thm}\label{thm:int}
The set of all probability distributions arising from the model \Mkp  with positive probabilities of a substitution on each edge of the binary tree $a_1a_2|a_3a_4$ is precisely the interior of $\Delta$.
\end{thm}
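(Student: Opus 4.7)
The plan has two parts, corresponding to the two inclusions.  That the image lies in $\Delta^{\circ}$ is immediate from positivity: under the standing assumption~(1) every pattern probability $p_{\mathbf i}$ is strictly positive, and conditioning on parsimony-informativeness preserves positivity of each of $Q_{xxyy}$, $Q_{xyyx}$, $Q_{xyxy}$.  The substantive direction is surjectivity onto the interior, and this is where I would focus.

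My approach is to compute the three pattern-type probabilities in closed form as polynomials in the edge parameters and then invert.  Each \Mk Markov matrix on edge $e$ has the single-eigenvalue form $M_e=\lambda_e I + \frac{1-\lambda_e}{k}J$, so the model on the tree $a_1a_2|a_3a_4$ is governed by exactly five numerical parameters $\lambda_1,\dots,\lambda_5$ (the non-unit eigenvalues on the four terminal edges and the one internal edge); positivity of the substitution probability on each edge is equivalent to $\lambda_e<1$.  Rooting at the internal node adjacent to $a_1,a_2$ and expanding each matrix entry as $(M_e)_{ab}=\lambda_e\delta_{ab}+(1-\lambda_e)/k$, the sums over internal node states collapse into closed-form expressions of the schematic shape
$$ p_{xxyy\text{-type}} \ \propto\ A - B, \qquad p_{xyxy\text{-type}}\ \propto\ C + D, \qquad p_{xyyx\text{-type}}\ \propto\ C - D,$$
where $A$ and $C$ are positive polynomials in $\lambda_1\lambda_2$ and $\lambda_3\lambda_4$ alone, $B$ contains the factor $\lambda_5(\lambda_1+\lambda_2)(\lambda_3+\lambda_4)$, and $D$ contains the factor $\lambda_5(\lambda_1-\lambda_2)(\lambda_3-\lambda_4)$.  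For $k=2$ these are the familiar Hadamard/Fourier expressions for \CFN, and the same block structure persists for general $k$.

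For surjectivity I fix a target $(Q_1,Q_2,Q_3)$ in the open simplex.  Using the tree automorphism swapping $a_3\leftrightarrow a_4$, which interchanges $Q_{xyxy}$ and $Q_{xyyx}$, it is harmless to assume $Q_2\ge Q_3$.  Up to scale the target reduces to two scalar constraints: one on the ratio $(A-B)/(2C)$, which controls $Q_1$ versus $Q_2+Q_3$, and one on $D/C$, which controls $Q_2-Q_3$.  I would parametrize $(\lambda_1,\lambda_2)$ by the elementary symmetric functions $\lambda_1\lambda_2$ and $\lambda_1+\lambda_2$ (with the sign of $\lambda_1-\lambda_2$ chosen freely), and $(\lambda_3,\lambda_4)$ analogously; the desired $D/C$ is then achieved by tuning $\lambda_5$ together with the two differences $\lambda_i-\lambda_j$, and the $(A-B)/C$ ratio is matched using the remaining freedom in the products and sums.

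The main technical obstacle I anticipate is bookkeeping: keeping every $\lambda_e$ strictly inside the allowed open range $(-1/(k-1),1)\setminus\{0\}$ throughout the construction.  Because the parameter space has dimension five while the target is only two-dimensional there is ample slack, and a case split on whether $Q_1$ is below, equal to, or above $1/3$, and on how close $Q_2-Q_3$ is to $\pm(Q_2+Q_3)$, keeps the parameters away from the boundary.  As a fallback, if the direct inversion becomes unwieldy, one can argue instead that the image is open in $\Delta^{\circ}$ (the map has rank two at a generic interior parameter), is connected (continuous image of a connected parameter space), and has closure equal to all of $\Delta$ (by taking $\lambda_e\to 1$, $\lambda_e\to -1/(k-1)$, or $\lambda_5\to\pm 1$ limits to approach each edge of $\Delta$), and then combine these with the semi-algebraicity of the image to conclude it equals $\Delta^{\circ}$.
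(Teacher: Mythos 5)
Your first inclusion (image contained in the interior of $\Delta$) is fine and is what the paper does. The gap is in surjectivity, in both your primary route and your fallback. The primary route --- compute the three pattern-type probabilities via the eigen-decomposition $M_e=\lambda_e I+\frac{1-\lambda_e}{k}J$ and then ``tune'' the five parameters to hit a prescribed $(Q_1,Q_2,Q_3)$ --- stops exactly where the real work begins. The assertion that the ratios $(A-B)/C$ and $D/C$ can be matched ``using the remaining freedom'' while keeping every $\lambda_e$ strictly inside its admissible open range \emph{is} the content of the theorem, and the constraints interact: for $k=2$ one has $8p_{xxyy}=(1+\theta_1\theta_2)(1+\theta_3\theta_4)-\theta_5(\theta_1+\theta_2)(\theta_3+\theta_4)$, and driving $Q_1$ toward $0$ forces the parameters toward the boundary of the admissible region, so you must verify that $(Q_2-Q_3)/(Q_2+Q_3)$ can still be controlled arbitrarily along the way. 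Nothing in the sketch does this, and it is not mere bookkeeping; the delicacy of hitting points near $\partial\Delta$ is visible in the paper's Lemma \ref{lem:Spgam}, where reaching a thin boundary neighborhood already requires passing to an enlarged parameter set $S'$ and then perturbing back into $S$.

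The fallback is logically insufficient as stated. An open, connected, dense semialgebraic subset of the interior of $\Delta$ need not equal that interior: its complement there is only forced to be a closed semialgebraic set of dimension at most $1$, and the interior of $\Delta$ minus a closed arc is open, connected, dense, and semialgebraic. (Moreover, openness of the image does not follow from the Jacobian having rank $2$ only at \emph{generic} parameters; where the rank drops the image can fail to be open, as $x\mapsto x^2$ on $(-1,1)$ already shows.) You need an argument excluding any missing interior \emph{point}, not just a missing open set, and that is precisely what the paper supplies: it builds a loop $\gamma$ in parameter space with $\phi\circ\gamma$ lying in an arbitrarily small neighborhood of $\partial\Delta$ and homotopic there to $\partial\Delta$ (Lemmas \ref{lem:Spgam} and \ref{lem:Sgam}); since the parameter space is contractible, $\phi\circ\gamma$ bounds inside the image, so a missed point $P$ would make $\phi\circ\gamma$ simultaneously trivial and nontrivial in $\pi_1(\Delta\smallsetminus\{P\})$. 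To repair your proof, either carry out the inversion honestly (feasible for $k=2$ in Hadamard coordinates, with a separate computation for general $k$), or replace the fallback with a topological or degree-theoretic argument of this kind.
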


As the set of probability distributions described in the theorem is independent of the tree topology, we immediately obtain the following.
\begin{cor}\label{thm:4taxa}
Suppose $T$ is a 4-taxon tree. Then the topology of $T$ is not identifiable for the model \Mkp or \GMp, for any $k\ge 2$. 
\end{cor}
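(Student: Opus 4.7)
The plan is to deduce the corollary from Theorem \ref{thm:int} via two observations: the leaf-relabeling symmetry among the three binary 4-taxon topologies, and the inclusion of \Mkp as a submodel of \GMp.

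First, I would apply Theorem \ref{thm:int} directly to the tree $a_1a_2|a_3a_4$, obtaining that every interior point of $\Delta$ is realized as $\vec Q=(Q_{xxyy},Q_{xyyx},Q_{xyxy})$ for some \Mkp parameter choice with strictly positive substitution probabilities on every edge. Next, I would invoke the same theorem, after a permutation of leaf labels (for instance swapping $a_2\leftrightarrow a_3$), on the topologies $a_1a_3|a_2a_4$ and $a_1a_4|a_2a_3$. Under such a relabeling the correspondence between the tree's cherries and the three pattern types permutes the coordinates of $\vec Q$, but since the interior of $\Delta$ is invariant under all coordinate permutations, the set of achievable $\vec Q$ vectors remains exactly the interior of $\Delta$ for each binary 4-taxon topology.

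Because the \Mk model is symmetric under state permutations, the \Mkp distribution on parsimony-informative patterns in $[k]^4$ is fully determined by $\vec Q$, so the three binary 4-taxon trees realize identical sets of \Mkp distributions. Given any interior point of $\Delta$, this yields three distinct parameter choices on three distinct tree topologies producing the same distribution, establishing non-identifiability of the topology for \Mkp. Since \Mkp sits inside \GMp as a submodel on each fixed tree, these coinciding \Mkp distributions remain valid \GMp distributions on all three topologies, and non-identifiability for \GMp follows at once.

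The argument is essentially bookkeeping once Theorem \ref{thm:int} is in hand. The only step requiring care is verifying that the permutation of $\vec Q$-coordinates induced by a leaf relabeling preserves the achievable set, which follows immediately from the permutation-invariance of the interior of $\Delta$. I do not foresee any serious obstacle.
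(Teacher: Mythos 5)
Your proposal is correct and follows essentially the same route as the paper, which deduces the corollary immediately from Theorem \ref{thm:int} by observing that the set of achievable distributions (the interior of $\Delta$) is independent of the tree topology; your explicit check that leaf relabeling merely permutes the coordinates of $\vec Q$, under which the interior of $\Delta$ is invariant, is just that observation spelled out. The passage to \GMp via the submodel inclusion is likewise the intended argument.
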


\begin{cor} Suppose  data is generated by the \Mkp or \GMp model, on a 4-taxon tree with parameters resulting in a positive probability of observing every parsimony-informative pattern. Then any method of inference
of the tree topology either (a) always returns all three trees, or (b) can be positively misleading. \end{cor}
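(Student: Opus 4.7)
The plan is to derive this corollary as an essentially immediate consequence of Corollary~\ref{thm:4taxa} combined with the strong law of large numbers. The hypothesis that every parsimony-informative pattern has positive probability puts the true conditional distribution $\vec Q$ in the interior of $\Delta$, and by Corollary~\ref{thm:4taxa} that same $\vec Q$ is realized by suitable numerical parameters on each of the three binary $4$-taxon topologies $T_1,T_2,T_3$. So from the limiting pattern frequencies alone the tree topology is uniformly unrecoverable across the entire interior of $\Delta$, not merely for the scattered examples of \cite{ParsCons}.

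Next I would formalize a ``method of inference'' as a (possibly randomized) function $\phi_N$ from a data set of size $N$ to a nonempty subset of $\{T_1,T_2,T_3\}$, allowing any subset to reflect ties. Because $\hat q_{\mathbf i}\to q_{\mathbf i}$ almost surely as $N\to\infty$, the asymptotic behavior of $\phi_N$ on data drawn from any fixed model is controlled by the limit distribution $\vec Q$. Let $S(\vec Q)\subseteq\{T_1,T_2,T_3\}$ denote the set of topologies that $\phi_N$ returns with positive limiting probability on such data.

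I would then split into two cases. If $S(\vec Q)=\{T_1,T_2,T_3\}$ for every $\vec Q$ in the interior of $\Delta$, then in every scenario covered by the hypothesis $\phi_N$ asymptotically returns all three trees---this is case (a). Otherwise there exist $\vec Q^*$ in the interior and some $T_i\notin S(\vec Q^*)$; by Corollary~\ref{thm:4taxa}, pick numerical parameters on $T_i$ realizing $\vec Q^*$. Data generated from $T_i$ with those parameters yield empirical distributions converging almost surely to $\vec Q^*$, so $\phi_N$ asymptotically excludes the true tree $T_i$ with positive probability. In the terminology the paper has already imported from \citet{Felsenstein1978b}, that is positively misleading inference, giving case (b).

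I do not expect any substantive mathematical obstacle here: the real work has been done in Theorem~\ref{thm:int}, and this corollary is just its translation into the language of statistical consistency. The only bit of care needed is to define ``method of inference'' generously enough (allowing stochastic procedures and subset-valued outputs) so that cases (a) and (b) are genuinely exhaustive; this is bookkeeping rather than mathematics.
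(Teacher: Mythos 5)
Your proposal is correct and follows the same route the paper intends: the corollary is stated without a separate proof precisely because, once Theorem \ref{thm:int} shows every interior point of $\Delta$ is realized by all three topologies, the data distribution carries no information about the tree, and your dichotomy (either all three trees are always returned, or some realizable $\vec Q^*$ excludes a topology that could be the true one) is exactly the intended ``immediate'' argument. The only refinement worth noting is that the cleanest justification that $S(\vec Q)$ is well-defined is that the \emph{entire} joint distribution of the observed counts, not just the limiting frequencies, is identical under the three matched parameterizations, so the output distribution of any procedure is the same regardless of which topology generated the data.
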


The proof of Theorem \ref{thm:int} given in Appendix \ref{sec:4taxa} uses explicit calculations and topological arguments.

Note that standard numerical maximum likelihood software generally infers a particular tree topology when 4-taxon data produced by the \Mkp model is analyzed under the misspecified \Mk model. Thus this model misspecification can lead to positively misleading inference.

\smallskip

For larger trees, one might expect that omitting parsimony-noninformative data would result in little loss of information. 
To establish positive results on the identifiability of parameters for the models \GMp and \Mkp, we focus on \GMp, since results about it apply to its submodels.
We separately address the identifiability of the tree topology and identifiability of the numerical model parameters, since the tree topology must be fixed before the numerical parameters are even meaningful.

\begin{thm}\label{thm:treeid} Suppose $n\ge 8$.
Then any $n$-taxon tree topology is identifiable for the \GMp model, and its submodels, such as \Mkp.\end{thm}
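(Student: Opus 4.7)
The plan is to reduce to \citet{Steel1994}'s identifiability of the tree topology under the unfiltered \GMk model, by showing that the observable data $(q_{\mathbf i})_{\mathbf i \in \mathcal I}$ determines the full joint distribution $(p_{\mathbf i})_{\mathbf i \in [k]^n}$. Since $p_{\mathbf i} = p \cdot q_{\mathbf i}$ for $\mathbf i \in \mathcal I$, with $p = \sum_{\mathbf i \in \mathcal I} p_{\mathbf i}$ unknown, the task is to pin down the single scalar $p$ together with the parsimony-noninformative entries $p_{\mathbf i}$ for $\mathbf i \notin \mathcal I$. Once this recovery is complete, Steel's theorem delivers the tree topology $T$.

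To carry out the recovery, I would use the phylogenetic invariants of \GMk. The central invariants are the rank constraints on flattening matrices: for any bipartition $A|B$ of the leaves corresponding to an edge of $T$, the $k^{|A|}\times k^{|B|}$ matrix $\operatorname{Flat}_{A|B}(P)$, whose entries are the pattern probabilities arranged according to the split, has rank at most $k$. Plugging $p\cdot q_{\mathbf i}$ into the informative entries and leaving $p$ and the noninformative entries as unknowns, these rank constraints, together with the normalization $\sum_{\mathbf i} p_{\mathbf i} = 1$ and marginalization consistency, become a polynomial system in the unknowns. The argument would run parametrically over candidate topologies: for each candidate $T'$ one attempts to extend the data via the $T'$-split flattening constraints and then checks whether the resulting candidate distribution lies in \GMk on $T'$. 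Only the true tree should yield an extension consistent with the model, so that identifiability of the full \GMk distribution, and hence of $T$, follows from Steel's result.

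The main obstacle, and the source of the threshold $n \geq 8$, is showing that the invariants force a unique extension when $n \geq 8$ and that every wrong topology is excluded by the consistency check. For $n \geq 8$ one can always choose an internal edge inducing a split $A|B$ with $|A|,|B|\geq 4$, so the flattening is at least $k^{4}\times k^{4}$ and the fraction of its entries corresponding to noninformative patterns is very small. The rank-$k$ requirement on such a large matrix is highly overdetermined once the known informative entries are substituted, which should force the unknowns uniquely, whereas for small trees the noninformative entries dominate the flattenings and the system is under-determined --- as already witnessed at $n=4$ by Corollary~\ref{thm:4taxa}. The delicate step is thus a combinatorial and linear-algebraic accounting of how the rank constraints mix known and unknown entries; the precise bound $n\geq 8$ is likely an artifact of this accounting rather than an essential threshold, which is consistent with the authors' remark that the cases $n=5,6,7$ are not fully analyzed.
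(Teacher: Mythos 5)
Your proposal takes a genuinely different route from the paper's, but it has a gap at its central step. The paper does \emph{not} recover the full distribution $(p_{\mathbf i})$ in order to prove tree identifiability. Instead, it reduces to quartets (via Theorem 6.3.5 of Semple and Steel), and for each quartet $a_1,a_2,a_3,a_4$ it fixes a pattern $\mathbf i_0$ on the remaining $n-4\ge 4$ taxa that is already parsimony-informative on those taxa alone. Every extension of $\mathbf i_0$ to all $n$ taxa is then automatically parsimony-informative, so the entire $4$-dimensional slice $Q_0$ of the array is observable with no missing entries; a log-det/four-point-condition argument applied to the $2$-dimensional marginalizations of $Q_0$ (the determinantal identities \eqref{eq:phyinv}--\eqref{eq:phyinv3}, plus a strict inequality) then picks out the correct quartet split. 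This is exactly where $n\ge 8$ enters: one needs $n-4\ge 4$ for a parsimony-informative pattern on the complementary taxa to exist. The full-distribution recovery you propose is the content of the paper's Lemma~\ref{lem:large}, but there it is used only for \emph{numerical} parameter identifiability (Theorem~\ref{thm:7tax}), and crucially it presupposes that the tree is already known, since the flattenings whose rank constraints are exploited are taken along edges of the true tree.

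The gap in your argument is the sentence ``Only the true tree should yield an extension consistent with the model.'' This is precisely the hard part, and it is not established by the observation that the system is overdetermined. For a wrong candidate topology $T'$ you would need to show either that no extension of the known entries satisfies the $T'$-flattening rank constraints, or that any such extension fails to lie in the image of \GMk on $T'$ with matching filtered data; neither is addressed, and checking membership in the model image is itself nontrivial. The paper's own Theorem~\ref{thm:int} and Corollary~\ref{thm:4taxa} show that for $n=4$ wrong topologies \emph{do} reproduce the filtered distribution exactly, so the existence of such false-positive extensions cannot be dismissed on soft grounds; some quantitative argument (like the paper's strict determinantal inequality distinguishing the three splits) is required to exclude them. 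Additionally, the heuristic that the rank-$k$ condition on a large flattening ``should force the unknowns uniquely'' needs the kind of explicit non-singularity verification the paper carries out in the case analysis of Lemma~\ref{lem:large} (exhibiting $k\times k$ minors with nonzero determinant via matrix factorizations); without that, the linear system for each unknown entry could be degenerate. As written, the proposal reduces the theorem to a claim that is at least as hard as the theorem itself.
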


Note that we do not claim $n=8$ is the minimal number of taxa ensuring identifiability for either \GMp or \Mkp, either for all $k$ or for any fixed choice. Our method of proof simply does not apply when $n\le 7$.

Since from a distribution for the \GMv model one may compute that of the \GMp model with the same parameters, we immediately obtain the following.

\begin{cor}\label{cor:treeid} Suppose $n\ge 8$.
Then any $n$-taxon tree topology is identifiable for the \GMv model, and its submodels, such as \Mkv.
\end{cor}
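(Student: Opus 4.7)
The plan is to reduce this corollary directly to Theorem~\ref{thm:treeid} by observing that a \GMv distribution already determines the corresponding \GMp distribution; once this reduction is made, the tree-topology identifiability for $n\ge 8$ is inherited. The underlying combinatorial fact driving the reduction is that every parsimony-informative pattern is variable, so $\mathcal I$ is a subset of the support of the \GMv distribution, and the \GMp distribution is obtained from the \GMv distribution simply by a second round of conditioning.

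Concretely, my first step is to write $v=\sum_{\mathbf i \text{ variable}} p_{\mathbf i}$ and $p=\sum_{\mathbf i\in\mathcal I} p_{\mathbf i}$, both of which are strictly positive under the standing assumption~(1) on the \GMk parameters. If the \GMv distribution is given, its entries are $q^v_{\mathbf i}=p_{\mathbf i}/v$ for each variable $\mathbf i$, so summing over $\mathcal I$ yields
\begin{equation*}
\sum_{\mathbf i\in\mathcal I} q^v_{\mathbf i} \;=\; \frac{p}{v} \;>\; 0,
\end{equation*}
and then for every $\mathbf i \in \mathcal I$,
\begin{equation*}
\frac{q^v_{\mathbf i}}{\sum_{\mathbf j \in \mathcal I} q^v_{\mathbf j}} \;=\; \frac{p_{\mathbf i}/v}{p/v} \;=\; \frac{p_{\mathbf i}}{p} \;=\; q_{\mathbf i}.
\end{equation*}
Thus the map from \GMv distributions to \GMp distributions is a well-defined, explicit function depending only on pattern probabilities and not on the underlying numerical parameters or on the tree.

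The final step is the identifiability argument itself. Suppose $T_1,T_2$ are $n$-taxon trees with $n\ge 8$, and some choice of \GMv-valid numerical parameters on each produces the same \GMv distribution. By the computation above they must also produce the same \GMp distribution. Theorem~\ref{thm:treeid} then forces $T_1$ and $T_2$ to have the same topology, proving tree-topology identifiability for \GMv. Since the reduction uses only the conditioning relationship between the two filtered distributions, it restricts to any submodel of \GMk, giving \Mkv as a special case. The main (and essentially only) obstacle is checking $p>0$ so that the renormalization is defined, which is immediate from positivity of the \GMk parameters; otherwise the corollary is a direct consequence of Theorem~\ref{thm:treeid}.
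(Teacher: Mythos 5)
Your proposal is correct and is essentially the paper's own argument: the paper derives the corollary in one sentence by noting that a \GMv distribution determines the corresponding \GMp distribution (by a further conditioning on $\mathcal I$, exactly the renormalization you write out), after which Theorem~\ref{thm:treeid} applies. You have simply made the paper's implicit computation explicit; nothing is missing.
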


The proof of Theorem \ref{thm:treeid} is given in Appendix \ref{sec:tree}, and depends on the construction of phylogenetic invariants for \GMp (\emph{i.e.}, polynomials that vanish
on any joint distribution of patterns for \GMp arising from a fixed tree topology). These invariants are close in spirit to an encoding of the well-known 4-point condition of \cite{Bun}, using the log-det distance \citep{CF87,Steel1994}, but the restriction to parsimony-informative patterns introduces complications.

\medskip

Assuming the tree topology is already known, we next consider the identifiability of numerical parameters. Although our result on identifiability of the tree topology required at least 8 taxa, fewer taxa suffice for our remaining arguments. 

For small trees, though, there are certainly instances of non-identifiability. For instance, in the 4-taxon case, for either of the models \GMtwop or \Mtwop, we cannot have identifiability of numerical parameters. The easiest way to see this is a dimension count: There are only 6 parsimony-informative patterns for \GMtwop on a 4-taxon tree, yet the model has 11 free numerical parameters. However, the continuous parameterization of the model cannot injectively map any full-dimensional subset of $\mathbb R^{11}$  into a 5-dimensional subspace of $\mathbb R^6$. In fact, any distribution arising from the model must arise from infinitely many choices of parameters. Similarly, the model \Mtwop on a 4-taxon tree has 5 free numerical parameters, but up to symmetry there are only 3 parsimony-informative patterns.

In Appendix \ref{sec:nums} we give the outline of the proof of the following, though the work of Appendix \ref{sec:large} is needed to complete the argument.

\begin{thm} \label{thm:7tax} Suppose $n\ge 7$ and $T$ is a known $n$-taxon tree. Then numerical parameters of the model \GMp, and its submodels, such as \Mkp, on $T$ are identifiable, up to 
choice of a root for $T$ and permutation
of the states at the internal nodes of $T$.
\end{thm}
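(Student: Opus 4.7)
The strategy is to reduce the identifiability claim for \GMp (with fixed tree) to the identifiability of numerical parameters for the \emph{unfiltered} \GMk model, which was established by \citet{MR97k:92011}. Chang's theorem says that on a fixed tree $T$, the full joint distribution $\{p_{\mathbf i}\}_{\mathbf i \in [k]^n}$ determines the root distribution $\boldsymbol\pi$ and all transition matrices $M_e$ up to the choice of root and permutations of states at internal nodes. So it suffices to show that the full joint distribution can be recovered from the conditional distribution $\{q_{\mathbf i}\}_{\mathbf i \in \mathcal I}$ on parsimony-informative patterns.

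First I would observe that the $q_{\mathbf i}$ determine the $p_{\mathbf i}$ for $\mathbf i \in \mathcal I$ only up to the common positive scalar $p = \sum_{\mathbf i \in \mathcal I} p_{\mathbf i} \in (0,1)$, via $p_{\mathbf i} = p\, q_{\mathbf i}$. The remaining unknowns are $p$ together with the parsimony-noninformative probabilities $\{p_{\mathbf j}\}_{\mathbf j \notin \mathcal I}$; crucially, there are only polynomially many of these in $n$, while $|\mathcal I|$ grows exponentially, so the system to be solved is massively overdetermined.

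To pin the unknowns down, I would employ phylogenetic invariants of the \GMk model on $T$: polynomials that vanish on every joint distribution arising from \GMk on $T$. Since any valid completion of the partially-known distribution must lie on the \GMk variety, each such invariant gives an algebraic equation in $p$ and the $p_{\mathbf j}$. The natural invariants to reach for are the rank conditions on the \emph{flattening matrices} $F_{A|B}$ associated with splits $A|B$ induced by edges of $T$: each flattening has rank at most $k$, producing a rich supply of $(k+1)\times(k+1)$ minor invariants. For $n \geq 7$, every internal edge of $T$ induces a split with both sides of size at least $3$, so these flattenings mix entries indexed by informative and noninformative patterns in a way that should allow one to solve for $p$ and then for the individual $p_{\mathbf j}$. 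The detailed algebraic extraction is what is deferred to Appendices \ref{sec:nums} and \ref{sec:large}.

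I expect the main obstacle to be showing that this polynomial system has a \emph{unique} physically valid solution, ruling out parasitic algebraic completions. A priori one has to worry that the \GMk variety is high-dimensional and that, on some exceptional stratum, several completions could satisfy all available invariants simultaneously. The hypothesis $n \geq 7$ is presumably what prevents this: for small $n$, the informative coordinates genuinely carry too little information, as shown by the 4-taxon dimension-count failures of \GMtwop and \Mtwop noted just before the statement. The argument must therefore exploit the combinatorial richness of splits available only once $n \geq 7$. Once the unique completion of $\{p_{\mathbf i}\}_{\mathbf i \in \mathcal I}$ to a full \GMk distribution is established, applying \citet{MR97k:92011} to the recovered joint distribution finishes the proof, with the stated root-choice and internal-state-permutation ambiguities inherited directly from that theorem.
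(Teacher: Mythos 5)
Your high-level strategy is exactly the paper's: extend $q_{\mathbf i}=p_{\mathbf i}/p$ to all patterns, use the $(k+1)\times(k+1)$ minors of edge flattenings (phylogenetic invariants for \GMk, which survive the rescaling by $1/p$ because determinants are homogeneous) to recover the noninformative entries, then recover $p$ and invoke \citet{MR97k:92011}. But the proposal stops precisely where the real work begins, and two of the supporting claims you do make are off. First, the homogeneous minor invariants can never determine $p$ itself, only the ratios $q_{\mathbf j}=p_{\mathbf j}/p$; the scalar $p$ is recovered afterwards from the separate normalization $p=1/(\sum_{\mathbf i\in[k]^n}q_{\mathbf i})$, not from the invariants. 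Second, your stated reason for needing $n\ge 7$ --- that every internal edge then induces a split with both sides of size at least $3$ --- is false (a caterpillar on $7$ taxa has internal edges giving $(2,5)$ splits). What the paper actually needs, and proves as a separate combinatorial lemma, is that some edge induces an $(m,n-m)$ split with $m\ge4$ and $n-m\ge3$; the whole argument is run on that one flattening.

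The genuine gap is Lemma \ref{lem:large} itself, which you defer as ``detailed algebraic extraction'' but which is the substance of the proof. The worry you raise about ``parasitic completions'' is not resolved by appealing to the richness of the variety; it is resolved constructively: for each noninformative pattern $\mathbf i$ one must exhibit a specific $(k+1)\times(k+1)$ submatrix of the flattening $\widetilde F_e$ containing $q_{\mathbf i}$ as its \emph{only} unknown entry (this forces a careful ordering --- repeated-but-nonconstant patterns first, then constant, then all-distinct, in six cases depending on how the repeated state distributes across the split) and then prove that the complementary $k\times k$ minor is nonzero, so that cofactor expansion determines $q_{\mathbf i}$ \emph{linearly} and hence uniquely. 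That nonvanishing is where the model hypotheses enter, via a factorization $L=C_1^T\diag(\boldsymbol\pi)C_2$ into products of nonsingular Markov and diagonal matrices, and it is exactly the step that fails for generic-only identifiability on $5$ taxa (Appendix \ref{sec:5taxa}). Without this construction the proposal shows only that the unknowns satisfy many polynomial constraints, not that they are determined.
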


The issue of identifiability of \GM parameters only up to a permutation of states at internal nodes of $T$ is a well-known one \citep{AR03}, arising because the joint distribution gives no information on which hidden state is which.  \citet{MR97k:92011} removed this ambiguity through a biologically-motivated assumption that all Markov matrices have their largest entries in each row appearing on the diagonal.
As permuting the states at internal nodes has the effect of reordering the rows and columns of the Markov matrices,
the highly-structured pattern of entries in the Markov matrices for \Mk enables one to remove the ambiguity even without Chang's assumption. Thus the identifiability of numerical parameters 
for the \Mkp model is, in fact, complete. 

\medskip

For trees with fewer than $7$ taxa, we obtain a slightly weaker result on identifiability of numerical parameters, as stated and proved in Appendix \ref{sec:5taxa}. Although that result is perhaps of less interest for biological application,  we include it as it provides a good introduction to the method of proof of
Theorem \ref{thm:7tax}.
These proofs again depend on phylogenetic invariants, but invariants not for the model \GMp, but rather for  \GMk \citep{ARgm,ARnme}. These invariants lead to algebraic formulas for determining the values of $p_\mathbf i$ for all $\mathbf i\in[k]^n$ from the values of $q_\mathbf i$ for those $\mathbf i\in \mathcal I$. Then the  identifiability of parameters for the \GMk model established by \cite{MR97k:92011}  completes the proof.

An interesting aspect of the work in Appendix \ref{sec:5taxa} is that our arguments for the \Mtwop model on 5-taxon trees 
establish parameter identifiability for generic parameter choices, but
fail under a molecular clock assumption.
Thus what one might consider the simplest assumption actually leads to a more difficult mathematical analysis, due to the symmetries
inherent in it.

\section{Acknowledgements}

The authors thank the Isaac Newton Institute, and the organizers of its Fall 2007 programme in Phylogenetics, for funding enabling the visits where this work was begun. 
MTH thanks the University of Kansas for travel funds to attend the INI. Funds from the National Science Foundation,  grant DMS 0714830,  made these visits possible for ESA and JAR, as well as supported the conclusion of the research.

All authors contributed equally to this work.

\appendix

\section{Non-identifiability of 4-taxon trees}\label{sec:4taxa}

Our proof of Theorem \ref{thm:int} will require the notion of the
the fundamental group of a space, from algebraic topology. As this does not commonly appear in the phylogenetics literature, \citet{Massey} provides a good development for those unfamiliar with it. The arguments in this appendix have little in common with those of the rest of the paper, so readers interested primarily in other results may elect to move on to Appendix \ref{sec:tree}.

\smallskip

Recall $\Delta$ denotes the 2-dimensional probability simplex, as defined in Section \ref{sec:results}.
To simplify some formulas, it will be convenient to represent a vector $\vec Q=(Q_1,Q_2,Q_3)\in \Delta$ by homogeneous coordinates $[Q_1,Q_2,Q_3]$ which are not all zero and are determined only up to rescaling by a non-zero constant. That is, $[Q_1,Q_2,Q_3]=[\lambda Q_1,\lambda Q_2,\lambda Q_3]$ for any $\lambda \ne 0$. Thus $[Q_1',Q_2',Q_3']$ represents $\vec Q$ with $Q_i=Q_i'/(Q_1'+Q_2'+Q_3')$.

Associate to each of the five edges of the tree $a_1a_2|a_3a_4$ a parameter giving the probability of a substitution occurring on that edge,
with  $s_1,s_2,s_3,s_4$ denoting the parameters on pendant edges leading to taxa $a_1,a_2,a_3,a_4$, respectively, and $s_5$ the parameter on the central edge.
Thus the Markov matrix $M_i$ has diagonal entries $1-s_i$ and off-diagonal entries $s_i/(k-1)$.
We focus on the subset of the parameter space
defined by 
$$S=\left \{(s_1,s_2,s_3,s_4,s_5) ~|~s_i\in\left (0,1-1/k \right ) \right \},$$
which corresponds to finite, positive edge lengths.
However, for technical reasons we will also need to consider the extension of the parameterization to the larger set
$$S'=\left \{(s_1,s_2,s_3,s_4,s_5) ~|~s_i\in\left [0,1-1/k\right ),\text{ and either } s_5>0 \text{ or two }s_i>0 \right \}.$$ 
We let $$\phi:S'\to \Delta$$ denote the (extended) parameterization map giving $\vec Q$ as a function of the 5 edge probabilities. 

For any $\epsilon>0$, let $D_\epsilon=\{\vec Q \in \Delta ~|~ \min Q_i < \epsilon\}$  denote an open neighborhood in $\Delta$ of  $\partial \Delta$, the boundary of $\Delta$. We also use $\partial \Delta$ to denote a loop, starting and ending at $[1,0,0]$, parameterizing $\partial \Delta$ in the counterclockwise direction in Figure \ref{fig:simplex}.

\begin{figure}[h]
\begin{center}
\includegraphics[width=2.5in]{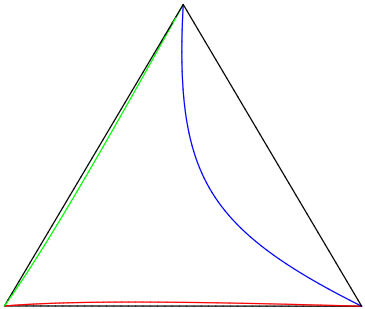}
\end{center}
\vskip -.8in
\caption{The two-dimensional probability simplex $\Delta$, with vertices $[1,0,0]$ (bottom left), $[0,1,0]$ (bottom right), and $[0,0,1]$ (top).
The three curves form the loop $\phi\circ\gamma$ constructed in Lemma \ref{lem:Spgam}, for $k=4$ and $\delta=0.3$, with the image of $\phi\circ\gamma_1$ in red, $\phi\circ\gamma_2$ in blue, and $\phi\circ\gamma_3$ in green. For smaller $\delta$, the loop would be closer to the boundary of $\Delta$.} \label{fig:simplex}
\end{figure}

\begin{lem} \label{lem:Spgam}For any $\epsilon>0$, there exists a loop $\gamma$ in $S'$ such that $\phi\circ\gamma$ is a loop in $D_\epsilon$, starting and ending at $[1,0,0]$, that is homotopic in $D_\epsilon$ to $\partial \Delta$.
\end{lem}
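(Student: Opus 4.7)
The plan is to reduce the problem to a projective-geometry construction by restricting $\phi$ to a 3-dimensional stratum of $S'$ where it has a remarkably simple closed form. Let $\Sigma=\{s_3=s_5=0\}\cap S'$. On $\Sigma$, a direct computation under the \Mk model---using that $s_5=0$ forces $v_{34}=v_{12}$ and $s_3=0$ forces $i_3=v_{34}$---yields, up to a common positive factor of $1/(k-1)$, the homogeneous coordinates
\begin{equation*}
\phi(s_1,s_2,0,s_4,0)=[\,s_1s_2(1-s_4)\,:\,s_1(1-s_2)s_4\,:\,(1-s_1)s_2 s_4\,].
\end{equation*}
Introducing the coordinate $r_i=(1-s_i)/s_i$, a decreasing homeomorphism $(0,1-1/k)\to(1/(k-1),\infty)$ extended by $r_i=+\infty$ at $s_i=0$, and dividing each entry by $s_1s_2s_4$, this collapses to the elegant projective formula $\phi=[\,r_4:r_2:r_1\,]$.

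With this reparameterization the loop construction becomes transparent.  Fix $r^0=2/(k-1)$, and for a large parameter $R=R(\epsilon)$, define $\gamma_R$ as follows.  Begin at the point $V_1^\infty=(s^0,s^0,0,0,0)\in S'$ where $s^0=(k-1)/(k+1)=1/(r^0+1)$; here $r_1=r_2=r^0$ and $r_4=+\infty$, so $\phi(V_1^\infty)=[1,0,0]$ exactly.  From $V_1^\infty$, move in $\Sigma$ by increasing $s_4$ from $0$ to $1/(R+1)$ (equivalently, decreasing $r_4$ from $+\infty$ to $R$) while keeping $r_1=r_2=r^0$; along this subarc $\phi=[r_4:r^0:r^0]$, so $Q_2=Q_3\le r^0/(R+2r^0)$.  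Then traverse, by straight line segments in $r$-space, the triangle $\Gamma_R$ with vertices $(r^0,r^0,R)$, $(r^0,R,r^0)$, $(R,r^0,r^0)$, and finally retrace the subarc to close up at $V_1^\infty$.  Along each edge of $\Gamma_R$ exactly one $r_i$ is held at $r^0$, forcing the corresponding image coordinate to equal $r^0/(R+2r^0)=O(1/R)$ throughout; hence that edge's image lies within $O(1/R)$ of the corresponding side of $\partial\Delta$.  Choosing $R$ so that $r^0/(R+2r^0)<\epsilon$ places $\phi\circ\gamma_R$ entirely inside $D_\epsilon$, and by construction the loop begins and ends at $[1,0,0]$.

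For the homotopy class, the one-parameter family $\{\phi\circ\gamma_{R'}\}_{R'\in[R,+\infty]}$ continuously interpolates between $\phi\circ\gamma_R$ (at $R'=R$) and the counterclockwise traversal of $\partial\Delta$ (attained as $R'\to\infty$, when $\Gamma_{R'}$ degenerates onto $\partial\Delta$ in the order $[1,0,0]\to[0,1,0]\to[0,0,1]\to[1,0,0]$ matching Figure~\ref{fig:simplex} and the subarc collapses to $V_1^\infty$).  Since $\min_i Q_i\le r^0/(R'+2r^0)\le r^0/(R+2r^0)<\epsilon$ uniformly for every $R'\ge R$, this family is a homotopy within $D_\epsilon$, yielding the desired homotopy to $\partial\Delta$.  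The principal obstacle I anticipate is making this last step fully rigorous: because $D_\epsilon$ is only an annular subset of $\Delta$ (not simply connected), mere proximity of two loops to $\partial\Delta$ does not automatically make them homotopic within $D_\epsilon$, and one must check both that $R'\mapsto\phi\circ\gamma_{R'}$ is continuous even at $R'=\infty$ (despite the degeneration of the subarc and of $\Gamma_{R'}$) and that no intermediate loop strays outside $D_\epsilon$ through the deleted central region.
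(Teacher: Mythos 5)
Your proof is correct, and it takes a genuinely different route from the paper's. The paper builds its loop from three separately defined arcs, two of which use a positive substitution probability $s_5$ on the central edge, and verifies the image of each by explicit (computer-assisted) evaluation of $\phi$; the homotopy to $\partial\Delta$ is then left implicit. You instead restrict to the single stratum $\{s_3=s_5=0\}$ of $S'$, where $\phi$ collapses to the projective formula $[\,r_4:r_2:r_1\,]$ --- which I have checked is correct (it reproduces the paper's values $\phi(s_1,0,0,s_4,0)=[0,1,0]$, $\phi(0,s_2,0,s_4,0)=[0,0,1]$, and the paper's own $\phi\circ\gamma_2(t)=[4(1-t)t\delta,1-t,t]$, whose $\gamma_2$ in fact lies in your stratum) --- so that the whole loop becomes a triangle in $r$-coordinates plus an approach arc reaching the vertex $[1,0,0]$ exactly. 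This buys two things: the containment in $D_\epsilon$ reduces to the single uniform estimate $\min_iQ_i\le r^0/(R+2r^0)$, and the homotopy to $\partial\Delta$ is realized explicitly by the family $R'\in[R,\infty]$ rather than asserted. The obstacle you flag at the end is not a real gap: your own uniform bound $\min_iQ_i\le r^0/(R'+2r^0)\le r^0/(R+2r^0)<\epsilon$, valid for every $R'\ge R$ and every point of every intermediate loop (including the limit $\partial\Delta$, where $\min_iQ_i=0$), already rules out straying into the deleted central region, and joint continuity at $R'=\infty$ is routine for the linear parameterizations after reparameterizing $[R,\infty]$ to a compact interval (the convergence of each edge's image to the corresponding edge of $\partial\Delta$ is uniform in $t$). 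Alternatively, you could sidestep the limit entirely by noting (as the paper implicitly does) that for $\epsilon<1/3$, which is without loss of generality, each set $\{Q_j<\epsilon\}\cap\Delta$ is convex, so each segment of $\phi\circ\gamma_R$ is homotopic rel endpoints within it to the corresponding edge of $\partial\Delta$.
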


\begin{proof}
We construct the loop $\gamma$ (see Figure 1), in three parts, with $\gamma_1$ chosen so $\phi\circ\gamma_1$ is a path from $[1,0,0]$
to $[0,1,0]$ which is near the edge of $\Delta$ joining those points, $\gamma_2$ chosen so $\phi\circ\gamma_2$ is a path from $[0,1,0]$ to $[0,0,1]$ which is near the edge of $\Delta$ joining those points, and $\gamma_3$ chosen so $\phi\circ\gamma_3$ is a path from $[0,0,1]$ to $[1,0,0]$ which is near the edge of $\Delta$ joining those points. As the construction requires some explicit elementary, but quite long, calculations, we provide these in a worksheet file for the computer algebra software Maple, available as supplementary material on our website \citep{AHRwebsite}.

One can give an explicit formula for $\phi$ (using, for instance, equations (1),(2),(3) of \cite{Schulmeister2004}), and check that 
\begin{align*} 
\phi(0,0,0,s_4,s_5)&=[1,0,0], \text{ for } s_4,s_5\in(0,1-1/k),\\
\phi(s_1,0,0,s_4,0)&=[0,1,0], \text{ for } s_1,s_4\in(0,1-1/k),\\
\phi(0,s_2,0,s_4,0)&=[0,0,1], \text{ for } s_2,s_4\in(0,1-1/k).
\end{align*}

For small $\delta>0$, and for $t\in[0,1]$, let $$\gamma_1(t)=\left ( \frac{2\delta t}{1+2\delta t},0,0,  \frac 13 ,    \frac {\delta(1- t)}{1+\delta (1-t)}\right ),$$
so $\gamma_1(0)=\left ( 0,0,0,  \frac 13 ,    \frac \delta{1+\delta} \right )$ and $\gamma_1(1)=\left ( \frac{2\delta}{1+2\delta},0,0,  \frac 13 ,   0 \right )$,
and one computes
$$\phi\circ \gamma_1(t)=\left [1-t, \frac {t}{k-1}, \frac{(1-t)t\delta}{(k-1)^2}\right ].$$
Note $\phi\circ\gamma_1(0)=[1,0,0]$,  $\phi\circ\gamma_1(1)=[0,1,0]$, and there exists a $\delta_1>0$ such that for all $0<\delta\le\delta_1$  the image of
$\phi\circ \gamma_1$ lies in $D_\epsilon$.

Next, let $$\gamma_2(t)=\left (\frac{ 2\delta(1-t)}{1+2\delta(1-t)},   \frac{ 2\delta t}{1+2\delta t},0,  \frac 13 ,    0\right ),$$
so $\gamma_2(0)=\left ( \frac{2\delta}{1+2\delta},0,0,  \frac 13 ,   0 \right )$ and $\gamma_2(1)=\left ( 0,\frac{2\delta}{1+2\delta},0, \frac 13 ,   0 \right )$.
Then it can be shown that
$$\phi\circ \gamma_2(t)=\left [4(1-t)t\delta, 1-t, t\right ],$$
so $\phi\circ\gamma_2(0)=[0,1,0]$ and  $\phi\circ\gamma_2Ä(1)=[0,0,1]$. Furthermore, there exists a $\delta_2>0$ so that for all $0<\delta\le\delta_2$,  the image of
$\phi\circ \gamma_2$ lies in $D_\epsilon$. 

The third segment of the path is defined similarly to the first,  with
$$\gamma_3(t)=\left ( 0,\frac{2\delta (1-t)}{1+2\delta (1-t)},0,  \frac 13 ,    \frac {\delta t}{1+\delta t}\right ),$$
so $\gamma_3(0)=\left ( 0,\frac{2\delta}{1+2\delta},0,  \frac 13 ,   0 \right )$ and $\gamma_3(1)=\left ( 0,0,0,  \frac 13 ,    \frac \delta{1+\delta} \right )$.  One checks that 
$$\phi\circ \gamma_3(t)=\left [t, \frac{(1-t)\delta t}{(k-1)^2}, \frac{1-t}{k-1}\right ],$$

Then for some $\delta_3>0$, if $\delta\le\delta_3$ then  $\phi\circ\gamma_3$ is a path in $D_\epsilon$ from $[0,0,1]$ to $[1,0,0]$.

Finally, for any $\delta\le\min(\delta_1,\delta_2,\delta_3)$ a loop with the desired properties is given by traversing these paths consecutively, by $\gamma=\gamma_1*\gamma_2*\gamma_3$.\end{proof}

We next obtain a similar result for the parameter space of interest, $S$.
\begin{lem}\label{lem:Sgam}
For any $\epsilon>0$,  there exists a loop $\gamma$ in $S$ such that the loop $\phi\circ\gamma$ is  in $D_\epsilon$ and homotopic  in $D_\epsilon$ to  $\partial \Delta$.
\end{lem}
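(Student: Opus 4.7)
The plan is to produce $\gamma$ by slightly perturbing the loop $\gamma^{0}$ constructed in Lemma \ref{lem:Spgam}, so that every edge-substitution coordinate becomes strictly positive, and to verify via continuity of $\phi$ that the required image and homotopy properties survive the perturbation.

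First I would apply Lemma \ref{lem:Spgam} with $\epsilon/2$ in place of $\epsilon$, obtaining a loop $\gamma^{0}$ in $S'$ such that $\phi\circ\gamma^{0}$ is a loop in $D_{\epsilon/2}$ that is homotopic in $D_{\epsilon/2}$ to $\partial\Delta$. Along the three segments $\gamma_1,\gamma_2,\gamma_3$ that make up $\gamma^{0}$, the fourth coordinate is the constant $1/3$, while each of the coordinates $s_1,s_2,s_3,s_5$ vanishes identically on at least one of the segments. For $\eta>0$ set
\[
\gamma^{\eta}(t)=\gamma^{0}(t)+\eta\,(1,1,1,0,1).
\]
Because the same constant vector is added on each segment, the junctions of the three pieces still match, so $\gamma^{\eta}$ is a loop; and for all sufficiently small $\eta>0$ every coordinate of $\gamma^{\eta}(t)$ lies in $(0,1-1/k)$, so $\gamma^{\eta}$ is a loop in $S$.

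Next I would exploit that $\phi$ is continuous on a compact neighborhood of the trace of $\gamma^{0}$. It follows that $\phi\circ\gamma^{\eta}$ converges uniformly in $t$ to $\phi\circ\gamma^{0}$ as $\eta\to 0^{+}$. Thus for $\eta$ sufficiently small the loop $\phi\circ\gamma^{\eta}$ lies entirely in $D_{\epsilon}$, and the straight-line interpolation
\[
H(s,t)=\phi\bigl(\gamma^{s\eta}(t)\bigr),\qquad (s,t)\in[0,1]\times[0,1],
\]
is a homotopy through loops from $\phi\circ\gamma^{0}$ to $\phi\circ\gamma^{\eta}$ whose image is contained in $D_{\epsilon}$. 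Concatenating $H$ with the homotopy in $D_{\epsilon/2}\subset D_{\epsilon}$ from $\phi\circ\gamma^{0}$ to $\partial\Delta$ supplied by Lemma \ref{lem:Spgam} yields a homotopy in $D_{\epsilon}$ between $\phi\circ\gamma^{\eta}$ and $\partial\Delta$. Then $\gamma=\gamma^{\eta}$ is the required loop in $S$.

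The argument carries no substantive obstacle: it is essentially a perturbation-by-continuity, with Lemma \ref{lem:Spgam} doing the topological work. The only point worth care is to leave oneself a buffer by applying Lemma \ref{lem:Spgam} with $\epsilon/2$, so that a small enough perturbation in parameter space cannot push either $\phi\circ\gamma^{\eta}$ or the image of $H$ out of $D_{\epsilon}$.
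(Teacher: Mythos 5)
Your proposal is correct and follows essentially the same route as the paper: perturb the loop from Lemma \ref{lem:Spgam} by a small constant vector to push it into $S$, then use continuity of $\phi$ on the compact trace of the loop to keep the image in $D_\epsilon$ and the straight-line homotopy between the original and perturbed loops to transfer the homotopy class. The only cosmetic differences are that the paper perturbs by $\delta(1,1,1,1,1)$ and invokes openness of $\phi^{-1}(D_\epsilon)$ directly rather than applying the previous lemma with $\epsilon/2$.
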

\begin{proof}
By Lemma \ref{lem:Spgam}, there is a loop $\gamma'$ in $S'$ such that $\phi\circ\gamma'$ is a loop in $D_{\epsilon}$ that is homotopic to  $\partial \Delta$ in $D_\epsilon$. Since $\phi^{-1}(D_\epsilon)$ is open in $S'$ and contains the compact set $\im(\gamma')$, there exists some $\delta'>0$ such that if $\vec s\in S$ and $\dist(\vec s, \im(\gamma'))<\delta'$, then $\phi(\vec s)\in D_\epsilon$. Thus for sufficiently small $\delta>0$, the loop defined by $\gamma(t)=\gamma'(t)+\delta(1,1,1,1,1)$ is in $S$ and $\phi\circ\gamma$ has image in $D_\epsilon$. Since $\gamma$ is homotopic to $\gamma'$ in $\phi^{-1}(D_\epsilon)$, then $\phi\circ\gamma$ is homotopic to $\partial \Delta$ in $D_\epsilon$.
\end{proof}

\begin{proof}[Proof of Theorem \ref{thm:int}]
It is clear that parameters in $S$ lead to positive probabilities of each parsimony-informative pattern, so $\phi(S)\subseteq \Delta\smallsetminus \partial \Delta$.

Let $P\in \Delta\smallsetminus \partial \Delta$, and suppose $P\notin \phi(S)$. Choose $\epsilon>0$ so $P\notin D_\epsilon$, and let $\gamma:[0,1]\to S$ be a loop whose existence is asserted by Lemma \ref{lem:Sgam}. Since a parameterization of $\partial \Delta$ is non-trivial in the fundamental group $\pi_1(\Delta\smallsetminus\{P\})$, $\phi\circ\gamma$ is non-trivial in that fundamental group as well.

However, since $S$ is contractible, there is a homotopy $g$ deforming $\gamma$ to a constant map. Then $h=\phi\circ g$ is a homotopy in $\Delta$ deforming $\phi\circ \gamma$ to a constant map. Since $P\notin \phi(S)$, this is actually a homotopy in $\Delta\smallsetminus \{P\}$. Thus $\phi\circ\gamma$ is trivial in the fundamental group.
This contradiction shows $P\in \phi(S)$. 

Thus $\phi(S)=\Delta\smallsetminus \partial \Delta$. 
\end{proof}

\section{Identifiability of larger trees}\label{sec:tree}

Our argument establishing Theorem \ref{thm:treeid} is at some level similar to ones establishing tree identifiability for more standard models using the existence of a phylogenetic distance. However, because of the filtered nature of the model, we cannot easily define a distance directly.
Instead, we construct certain phylogenetic invariants that can distinguish tree topologies. While these invariants are motivated by a statement of the 4-point condition for the log-det distance, the details of the construction are much more involved.

\smallskip

For proving both Theorem \ref{thm:treeid} and subsequent results, it will be convenient to use the following notation.
Suppose for some choice of parameters for the model \GMk on an $n$-taxon tree the resulting distribution of patterns is given by $\{p_{\mathbf i}\}_{\mathbf i\in [k]^n}$.
Then let $P$ denote the $k\times\dots\times k$ $n$-dimensional array whose entries are $P(i_1,\dots i_n)=p_{i_1\cdots i_n}$. Similarly, for the same parameters
for the model \GMp, suppose the resulting distribution of parsimony-informative patterns is given by
$\{q_{\mathbf i}\}_{\mathbf i\in \mathcal I}$. Then let $Q$ denote a $k\times\dots\times k$ $n$-dimensional array whose entries are $Q(i_1,\dots i_n)=q_{i_1\cdots i_n}$ for $i_1\cdots i_n\in \mathcal I$, and are undefined for $i_1\cdots i_n\notin \mathcal I$. (In this section, we will avoid reference to any undefined entries of $Q$, but in subsequent sections we will give meaning to them.)

\begin{dfn}
Suppose $S$ is some subset of the taxa $\{a_1,\dots, a_n\}$. Then for any pattern $\mathbf i\in [k]^n$, let $\proj_S(\mathbf i)$ denote the vector in $[k]^{|S|}$ of only those components
$i_j$ of $\mathbf i$ with $a_j\in S$. Thus $\proj_S(\mathbf i)$ is the subpattern of $\mathbf i$ of states at the taxa in $S$.
\end{dfn}

\begin{proof}[Proof of Theorem \ref{thm:treeid}] By Theorem 6.3.5 of \citet{MR2060009}, it is enough to show we can identify the topology of the induced subtree for every quartet of taxa. Without loss of generality, we may focus on identifying the topological tree relating  $a_1$, $a_2$, $a_3$, $a_4$.
We may also assume the tree is rooted at the node of our choice: the node where paths leading from $a_1$, $a_2$, and $a_3$ join.

Let $S=\{a_5,\dots,a_n\}$. Choose and fix any pattern $\mathbf i_0=(i_5,i_6,\dots,  i_n)\in[k]^{n-4}$ of states for taxa in $S$ that is
parsimony-informative for $S$. (This requires that $n\ge 8$.) 
Consider the 4-dimensional array $Q_0$ whose entries are all $q_\mathbf i$ such that $\proj_S(\mathbf i)=\mathbf i_0$. This is a 4-dimensional `slice' of the array $Q$ in which only the states at taxa $a_1,a_2,a_3,a_4$ vary. However, $Q_0$ has no undefined entries, as all its entries arise from patterns in $\mathcal I$.

Next we apply the essential idea behind the log-det distance on 4-taxon trees, but modify it to deal with the array $Q_0$. Our argument is similar to that of \citet{Steel1994}, but new details require a full presentation. 

\medskip

Suppose the true quartet tree relating $a_1,a_2,a_3,a_4$ displays the split $a_1a_2|a_3a_4$.
Then to each of the 4 (in the unresolved case) or 5 edges $\tilde e$ of the  quartet tree, we associate a matrix $N_{\tilde e}$ in the following way:

Any edge $\tilde e$ in the quartet tree corresponds to a path $e_1,e_2,\dots, e_r$ in the full tree $T$, possibly with branches leading off toward some of the $a_i$ with $i\ge5$, as illustrated by the representative cartoons
of Figure \ref{fg:Quartets}. 

\begin{figure}[h]
\begin{center}
\includegraphics[height=1in]{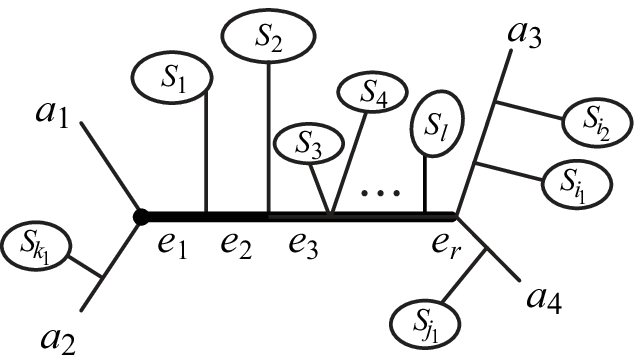} \hskip 1.5cm
\includegraphics[height=1in]{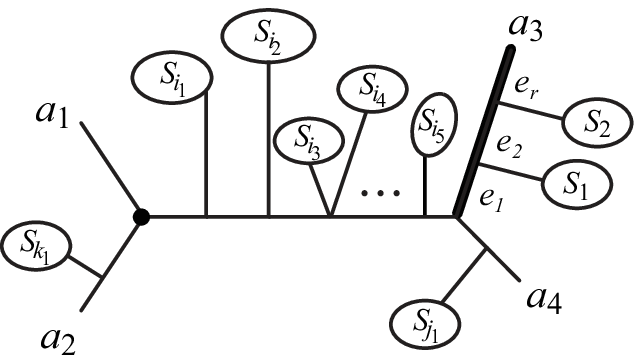}
\end{center}
\caption{Edges $\tilde e$ (shown in bold) in the induced quartet tree correspond to paths $e_1,e_2,\dots, e_r$ in the full tree $T$.
The lines leading off of $\tilde e$ represent the subtrees relating some subcollection of taxa $a_i$, with $i\ge 5$, which are attached in $T$ to nodes along the path.
The common root of  $T$ and the quartet tree is marked with a large dot.}\label{fg:Quartets}
\end{figure}
Consider first a binary tree $T$.
Each subtree of $T$ coming off the path at the node at the end of an $e_i$ contains leaves labeled by taxa in a set $S_i\subseteq S$. To this subtree, associate a vector $\mathbf v_i\in (0,1)^k$ giving the conditional probabilities that each of the states at this node produces the pattern  $\proj_{S_i}(\mathbf i_0)$. While polynomial formulas could be given for these vectors in terms of entries of the Markov matrix parameters, we do not need explicit expressions, so we omit them.   
Now to an edge $\tilde e$ in the quartet tree associate the matrix
\begin{equation}N_{\tilde e}= M_{e_1}\diag(\mathbf v_1)M_{e_2}\diag(\mathbf v_2)M_{e_3}\cdots\diag(\mathbf v_{r-1})M_{e_r},\label{eq:Nprod}\end{equation}
where the $M_{e_i}$ are the Markov matrix parameters on $T$. Thus $N_{\tilde e}$ gives probabilities of changes to all states at the end of $\tilde e$ and to $\proj_{S_i}(\mathbf i_0)$ at the taxa in $S_i$ conditioned on the state at the start of $\tilde e$.

If $T$ is not a binary tree, this expression for $N_{\tilde e}$ is not yet well defined.  By specifying that
subtrees attached to internal nodes of the quartet tree are considered to be attached to  specific pendant quartet tree edges, we remove some ambiguity,
though the expression for $N_{\tilde e}$ for pendant edges may now begin with one or more diagonal matrices, rather than an $M_e$. We also must
allow more than one adjacent diagonal matrix factor in the expression for $N_{\tilde e}$ given in equation \eqref{eq:Nprod} due to multifurcations in $T$ along $\tilde e$.
In case the quartet tree is also not binary, we may for convenience consider a resolved quartet tree and assume the product associated to the internal quartet edge is empty, with $N_{\tilde e}=I$. 
Note that by our assumption that all $M_{e_i}$ have all positive entries, the non-binary quartet tree is the only case in which any $N_{\tilde e}=I$, and otherwise all entries of $N_{\tilde e}$ are positive. 

In all cases,  our hypotheses ensure $N_{\tilde e}$ is non-singular.

\smallskip

Now for the quartet tree associated to the split $a_1a_2|a_3a_4$, let $N_i$, $i=1,2,3,4$ be the four such matrices associated to the edges leading to the leaves, and $N_5$ the matrix associated to the interior edge, as described above.  Redefine the sets $S_i\subseteq S$ to be the set of  taxa $a_i$, $i\ge 5$, which
are in subtrees of $T$ coming off of each of those five quartet edges.  
The entries of the matrices $N_i$ then give conditional probabilities, conditioned on the state at the start of the quartet edge, of observing each state at the end of the
quartet edge and \emph{also} observing $\proj_{S_i}(\mathbf i_0)$. Although their entries are probabilities, the $N_i$ are typically not  Markov matrices, as  entries in each row add to 1 only when $S_i=\emptyset$.

For $i=1,2,3,4$, let $\mathbf w_i=N_i \mathbf 1$ where $\mathbf 1$ is the column vector with all entries 1. The  entries of $\mathbf w_i$, therefore, give the probabilities of observing  $\proj_{S_i}(\mathbf i_0)$, conditioned on the state at the start of the pendant quartet edge, since we are simply marginalizing $N_i$ over the index corresponding to $a_i$.

Let $\mathbf w_{34}$ be the column vector of probabilities of observing $\proj_{S_3\cup S_4\cup S_5}(\mathbf i_0)$ conditioned on the state at the root, so
\begin{align*}
\mathbf w_{34}=N_5 \diag(\mathbf w_3)\diag(\mathbf w_4)\mathbf 1=N_5\diag(\mathbf w_3)\mathbf w_4
=N_5\diag(\mathbf w_4)\mathbf w_3.\end{align*} 
Let $\mathbf w_{12}$ be the vector of probabilities of observing $\proj_{S_1\cup S_2\cup S_5}(\mathbf i_0)$, conditioned on the state at the node
where the quartet edges leading to taxa $a_3,a_4$ join. 
Using Bayes' formula to `reroot' the quartet tree at the second internal node, we similarly find
\begin{align*}\mathbf w_{12}&=\diag(\boldsymbol \pi N_5)^{-1}N_5^T \diag(\boldsymbol \pi)\diag(\mathbf w_1)
\mathbf w_2\\&=\diag(\boldsymbol \pi N_5)^{-1}N_5^T \diag(\boldsymbol \pi)\diag(\mathbf w_2)
\mathbf w_1.\end{align*}
Under our hypotheses, all entries of every $\mathbf w_i$ and $\mathbf w_{ij}$ are positive, as there is a positive conditional probability of every state change occurring on every edge of the full tree.

We now have the following matrix formulas expressing 2-dimensional marginalizations of $Q_0$ in terms of model parameters:
\begin{align*}
Q_0(\cdot,\cdot,+,+)&:=\sum_{i,j\in[k]}Q_0(\cdot,\cdot,i,j)=N_1^T\diag(\boldsymbol \pi)\diag(\mathbf w_{34}) N_2,\\
Q_0(\cdot,+,\cdot,+)&:=\sum_{i,j\in[k]}Q_0(\cdot,i,\cdot,j)=N_1^T\diag(\boldsymbol \pi)\diag(\mathbf w_2) N_5 \diag(\mathbf w_4)N_3,\\
Q_0(\cdot,+,+,\cdot)&:=\sum_{i,j\in[k]}Q_0(\cdot,i,j,\cdot)=N_1^T\diag(\boldsymbol \pi)\diag(\mathbf w_2) N_5\diag(\mathbf w_3)N_4,\\
Q_0(+,\cdot,\cdot,+)&:=\sum_{i,j\in[k]}Q_0(i,\cdot,\cdot,j)=N_2^T\diag(\boldsymbol \pi)\diag(\mathbf w_1) N_5 \diag(\mathbf w_4)N_3,\\
Q_0(+,\cdot,+,\cdot)&:=\sum_{i,j\in[k]}Q_0(i,\cdot,j,\cdot)=N_2^T\diag(\boldsymbol \pi) \diag(\mathbf w_1)N_5\diag(\mathbf w_3)N_4,\\
Q_0(+,+,\cdot,\cdot)&:=\sum_{i,j\in[k]}Q_0(i,j,\cdot,\cdot)=N_3^T\diag(\boldsymbol \pi N_5)\diag(\mathbf w_{12}) N_4.
\end{align*}

These imply \begin{equation}\det(Q_0(\cdot,+,\cdot,+))\det(Q_0(+,\cdot,+,\cdot))-
\det(Q_0(\cdot,+,+,\cdot))\det(Q_0(+,\cdot,\cdot,+))=0.\label{eq:phyinv}\end{equation}
As the left hand side of this equation is a polynomial in the $q_\mathbf i$, $\mathbf i \in \mathcal I$, it is a phylogenetic invariant for the model \GMp. It is analogous the the 4-point distance identity
$d(a_1,a_3)+d(a_2,a_4)=d(a_1,a_4)+d(a_2,a_3)$, and it must vanish on any distribution arising from \GMp in which the induced quartet tree on the first four taxa
displays the split $a_1a_2|a_3a_4$.  
Two invariants similar to that of equation (\ref{eq:phyinv}) can be constructed that will vanish if the quartet tree displays the other possible splits. For the split  $a_1a_3|a_2a_4$ we have
\begin{equation}\det(Q_0(\cdot,\cdot,+,+))\det(Q_0(+,+,\cdot,\cdot))-
\det(Q_0(\cdot,+,+,\cdot))\det(Q_0(+,\cdot,\cdot,+))=0,\label{eq:phyinv2}\end{equation}
and for the split
$a_1a_4|a_2a_3$ 
\begin{equation}\det(Q_0(\cdot,\cdot,+,+))\det(Q_0(+,+,\cdot,\cdot))-
\det(Q_0(\cdot,+,\cdot,+))\det(Q_0(+,\cdot,+,\cdot))=0.\label{eq:phyinv3}\end{equation}

To show that we can use these invariants to identify tree topologies, we need only establish strict inequalities analogous to the distance inequality $d(a_1,a_2)+d(a_3,a_4)<d(a_1,a_3)+d(a_2,a_4)$ which holds provided the central edge of a quartet tree displaying   $a_1a_2|a_3a_4$ has non-zero length. Doing so
would imply that for the fully resolved quartet tree exactly one of the three equations \eqref{eq:phyinv}, \eqref{eq:phyinv2}, and \eqref{eq:phyinv3} can hold.
As the formula for the log-det distance involves a minus sign, we reverse the inequality and, assuming $N_5\ne I$, so all entries of $N_5$ are positive, we seek to show
$$\det(Q_0(\cdot,\cdot,+,+))\det(Q_0(+,+,\cdot,\cdot)>\det(Q_0(\cdot,+,\cdot,+))\det(Q_0(+,\cdot,+,\cdot)).$$
By the expressions for the marginalizations above, this is equivalent to
\begin{multline*}\det(N_1^T\diag(\boldsymbol \pi)\diag(\mathbf w_{34})N_2)\det(N_3^T\diag(\boldsymbol \pi N_5)\diag(\mathbf w_{12})N_4)>\\
\det(N_1^T\diag(\boldsymbol \pi)\diag(\mathbf w_2)N_5\diag(\mathbf w_4)N_3)\times\\
\det(N_2^T\diag(\boldsymbol \pi)\diag(\mathbf w_1)N_5\diag(\mathbf w_3)N_4),\end{multline*}
or, since the $N_i$ and $\diag(\boldsymbol \pi)$ are non-singular,
\begin{multline*}\det(\diag (\boldsymbol \pi N_5))\det(\diag(\mathbf w_{12}))\det(\diag(\mathbf w_{34}))>\\\det(N_5)^2\det(\diag(\boldsymbol \pi)\diag(\mathbf w_{1})\diag(\mathbf w_{2})\diag(\mathbf w_{3})\diag(\mathbf w_{4})),\end{multline*}
or, using the above expressions for the $\mathbf w_{ij}$,
\begin{multline}
\det(\diag (\boldsymbol \pi N_5))\det(\diag(\diag(\boldsymbol \pi N_5)^{-1}N_5^T \diag(\boldsymbol \pi)\diag(\mathbf w_1)
\mathbf w_2))\times\\
\det(\diag(N_5\diag(\mathbf w_3)\mathbf w_{4}))>\\\det(N_5)^2\det(\diag(\boldsymbol \pi)\diag(\mathbf w_{1})\diag(\mathbf w_{2})\diag(\mathbf w_{3})\diag(\mathbf w_{4})).\label{eq:2show}\end{multline}

To establish inequality (\ref{eq:2show}) we will use the following:

\begin{lem}\label{lem:diagin}
Suppose $A$ is a $n\times n$ matrix with positive entries, and the row vector $\mathbf x\in \mathbb R^n$ has positive entries. Then $$\det(\diag (\mathbf x A)) > |\det A|\det(\diag(\mathbf x)),$$ and
$$\det(\diag (A \mathbf x^T) ) > |\det A|\det(\diag(\mathbf x)).$$

\end{lem}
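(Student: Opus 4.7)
The plan is to reinterpret both sides as natural quantities associated with the positive matrix $B=\diag(\mathbf x)A$, and then to invoke a strict form of Hadamard's determinantal inequality. On the one hand, $\det B=(\prod_i x_i)\det A$, so
\[|\det A|\det(\diag(\mathbf x))=|\det B|,\]
which is the right-hand side of the first inequality. On the other hand, the $j$th entry of the row vector $\mathbf x A$ is $\sum_i x_i a_{ij}$, which is precisely the $j$th column sum of $B$, and hence (by positivity of $B$) the $\ell^1$-norm $\|B_{\cdot j}\|_1$ of its $j$th column. Thus the first inequality is equivalent to
\[|\det B|<\prod_{j=1}^n \|B_{\cdot j}\|_1.\]

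The crucial step is to chain $|\det B|\le \prod_j \|B_{\cdot j}\|_2\le \prod_j \|B_{\cdot j}\|_1$, using Hadamard's inequality for the first step and the elementary bound $\|v\|_2\le\|v\|_1$ for the second. Either step is non-strict in general, but one is strict under our hypotheses: Hadamard's inequality is strict whenever the columns of $B$ are not mutually orthogonal, which is guaranteed because any two columns of a strictly positive matrix have a strictly positive inner product and therefore are non-orthogonal. (The $\ell^2\le\ell^1$ step is also strict for $n\ge 2$, since every column of $B$ then has at least two positive entries, but this observation is not needed.)

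The second inequality reduces to the first by transposition: applying the first inequality to $A^T$ in place of $A$ yields
\[\det(\diag(\mathbf x A^T))>|\det A^T|\det(\diag(\mathbf x))=|\det A|\det(\diag(\mathbf x)),\]
and since $\mathbf x A^T=(A\mathbf x^T)^T$ as row vectors, the two diagonal matrices $\diag(\mathbf x A^T)$ and $\diag(A\mathbf x^T)$ coincide. I expect no real obstacle in this proof; the only delicate point is ensuring strictness, and the positivity hypothesis on $A$ is exactly what makes the Hadamard bound strict. The sole edge case is $n=1$, where equality holds trivially, but this case does not arise in the application, where the matrix in question is $N_5$ of size $k\times k$ with $k\ge 2$.
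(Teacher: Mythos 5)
Your proof is correct, but it takes a genuinely different route from the paper's. The paper proves the $2\times 2$ case by direct expansion, $(ax+cy)(bx+dy) > adxy+bcxy > |ad-bc|xy$, and refers to \cite{Steel1994} for the general case; the argument there is the classical combinatorial one, in which $\prod_j\|B_{\cdot j}\|_1$ (with $B=\diag(\mathbf x)A$, in your notation) is expanded as a sum over all maps $[n]\to[n]$ of positive products $\prod_j B_{\sigma(j),j}$, which strictly dominates the alternating sum over permutations that gives $\det B$. Your reformulation of both sides in terms of $B$ is the same normalization step, but you then replace the expansion argument by the chain $|\det B|<\prod_j\|B_{\cdot j}\|_2\le\prod_j\|B_{\cdot j}\|_1$, getting strictness from the equality condition in Hadamard's inequality (positive columns are pairwise non-orthogonal, and none is zero). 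This is a clean, citable shortcut that avoids any combinatorics, at the cost of importing a named inequality and its equality analysis; the paper's route is more elementary and self-contained. Your handling of the transpose for the second inequality matches the paper exactly, and your observation that the lemma degenerates to equality when $n=1$ (so that, strictly speaking, the statement needs $n\ge 2$, which is harmless since it is applied to the $k\times k$ matrix $N_5$ with $k\ge 2$) is a legitimate point the paper does not flag.
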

\begin{proof}
We prove the $2\times 2$ case here as an illustration. The general proof can be extracted from \cite{Steel1994}.

With $A=\begin{pmatrix} a&b\\c&d\end{pmatrix},$ $\mathbf x=(x,y)$, since $a,b,c,d,x,y>0$,
the first inequality follows from

$$(ax+cy)(bx+dy)> adxy+bcxy > |ad-bc|xy.$$
The second inequality follows from applying the first to the transpose of $A$.
\end{proof}
Now to establish inequality (\ref{eq:2show}), by applying Lemma \ref{lem:diagin} twice, it is enough to show
\begin{multline*}
\det(\diag (\boldsymbol \pi N_5))\cdot
|\det(\diag(\boldsymbol \pi N_5)^{-1}N_5^T \diag(\boldsymbol \pi)\diag(\mathbf w_1))|
\det(\diag(\mathbf w_2))\cdot\\
|\det(N_5\diag(\mathbf w_3))|
\det(\diag(\mathbf w_{4}))\ge
\\\det(N_5)^2\det(\diag(\boldsymbol \pi)\diag(\mathbf w_{1})\diag(\mathbf w_{2})\diag(\mathbf w_{3})\diag(\mathbf w_{4})).\end{multline*}
After canceling many non-zero determinants appearing on both sides of this inequality, we see it simply states that $1\ge 1$.
\end{proof}

\section{Identifiability of numerical parameters}\label{sec:nums}

The full proof of Theorem \ref{thm:7tax}, on identifiability of numerical model parameters,  depends upon a key technical lemma. This lemma requires extensive arguments that are deferred to Appendix \ref{sec:large}. To motivate the lemma, and make the flow of the larger argument clearer, we first give the proof of the Theorem assuming that lemma is known.

\begin{proof}[Proof of Theorem \ref{thm:7tax}]

For $\mathbf i\in \mathcal I$, $q_{\mathbf i}$ has been defined 
in equation (\ref{eq:qi}), as the conditional probability of observing pattern $\mathbf i$ given that a parsimony-informative pattern is observed.  For mathematical convenience, we extend the definition
of $q_{\mathbf i}$ by the formula in equation (\ref{eq:qi}) to \emph{all} $\mathbf i$, but do not give a probabilistic interpretation to its meaning for  $\mathbf i\notin \mathcal I$. We emphasize that the denominator in this definition remains a sum only over $\mathbf i\in\mathcal I$.

In Appendix \ref{sec:large}, Lemma \ref{lem:large} will show that from the $q_{\mathbf i}$ with $\mathbf i\in \mathcal I$ arising from the \GMp model on a known tree
of at least 7 taxa,  we may determine all
$q_{\mathbf i}$ with $\mathbf i\notin \mathcal I$.  As motivating and proving this lemma requires an extended exposition, we simply assume the result for now.

By equation (\ref{eq:qi}) we know that for $\mathbf i\in [k]^n$ the
$p_{\mathbf i}$ can be obtained from the $q_{\mathbf i}$ by rescaling by the (unknown) factor $p=\sum_{\mathbf i\in \mathcal I} p_\mathbf i$.
Since $\sum_{\mathbf i\in [k]^n} p_{\mathbf i}=1$, however, we may determine $p$ by the formula $p=1/(\sum_{\mathbf i\in[k]^n} q_{\mathbf i})$. Thus
we can determine all $p_{\mathbf i}$ from all $q_{\mathbf i}$.

Finally, with all $p_{\mathbf i}$ known, we can apply the identifiability result of \citet{MR97k:92011} on the \GMk model to complete the argument.
Chang's formulation actually requires additional assumptions on the \GMk model parameters (`diagonal
dominance in rows') which enable one to determine the ordering of the rows and columns of each Markov matrix parameter. As we have not made such an assumption, we note his argument shows the parameters are
only determined up to permutations of states at the internal nodes of the tree.  
\end{proof}

As this proof outline indicates, the major step is in establishing Lemma \ref{lem:large}. Although not logically necessary, to motivate the proof of that lemma, we first investigate the 5-taxon tree case for the model \GMtwop in the next section.
Complications will arise, due to the possibility that certain expressions
may be zero. That will lead us to first establish identifiability for generic parameters in the 5-taxon case,
and then investigate whether exceptional non-identifiable choices of parameters may exist.

\section{Identifiability of numerical parameters: the 5-taxon, \GMtwop case}\label{sec:5taxa}

Following the proof of Theorem \ref{thm:7tax}, to establish identifiability of numerical parameters for the \GMtwop model on a 5-taxon tree, it would be enough to show the $q_{\mathbf i}$ for $\mathbf i\in \mathcal I$ determine
those for $\mathbf i\notin \mathcal I$.
Although we will
see this is not true in complete generality, investigating the conditions under which it is true will
raise some interesting further 
questions, as well as point the way toward Lemma \ref{lem:large}.

\smallskip

We need the following result, a special case of a more general theorem proved by
\citet{ARgm}. (For a more expository presentation, see \citet{ARnme}.)

\begin{thm}\label{thm:2inv} For the \GMtwo model on a 5-taxon  binary tree as
shown in Figure \ref{fig:5taxa}, let $\{0,1\}$ denote the set of character states.
Let $p_{i_1i_2i_3i_4i_5}$ denote the joint
probability of observing state $i_j$ in the sequence at leaf $a_j$,
$j=1,\dots,5$.
\begin{figure}[h]
\begin{center}
\includegraphics[height=1in]{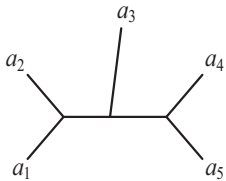}
\end{center}
\caption{A 5-taxon tree}\label{fig:5taxa}
\end{figure}
Then the ideal of phylogenetic invariants for this model
are generated by the $3\times 3$ minors
of the following two matrices:
$$F_1=\begin{pmatrix} p_{00000}& p_{00001}& p_{00010}& p_{00011}& p_{00100}&
p_{00101}& p_{00110}& p_{00111} \\ p_{01000}& p_{01001}& p_{01010}&
p_{01011}& p_{01100}& p_{01101}& p_{01110}& p_{01111}\\ p_{10000}&
p_{10001}& p_{10010}& p_{10011}& p_{10100}& p_{10101}& p_{10110}&
p_{10111}\\ p_{11000}& p_{11001}& p_{11010}& p_{11011}& p_{11100}&
p_{11101}& p_{11110}& p_{11111} \end{pmatrix}$$
and
$$F_2=\begin{pmatrix}
p_{00000}&p_{00001}&p_{00010}&p_{00011}\\
p_{00100}&p_{00101}&p_{00110}&p_{00111}\\
p_{01000}&p_{01001}&p_{01010}&p_{01011}\\
p_{01100}&p_{01101}&p_{01110}&p_{01111}\\
p_{10000}&p_{10001}&p_{10010}&p_{10011}\\
p_{10100}&p_{10101}&p_{10110}&p_{10111}\\
p_{11000}&p_{11001}&p_{11010}&p_{11011}\\
p_{11100}&p_{11101}&p_{11110}&p_{11111} \end{pmatrix}.  $$

\end{thm}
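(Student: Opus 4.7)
The plan is to recognize $F_1$ and $F_2$ as the edge flattening matrices associated with the two internal edges of the 5-taxon binary tree in Figure \ref{fig:5taxa}, and then to invoke the general structural theorem of \citet{ARgm} on phylogenetic ideals for \GMk on binary trees. The binary 5-taxon tree (a caterpillar with cherries $\{a_1,a_2\}$ and $\{a_4,a_5\}$) has exactly two internal edges, inducing the splits $\{a_1,a_2\}|\{a_3,a_4,a_5\}$ and $\{a_1,a_2,a_3\}|\{a_4,a_5\}$. Reading the matrices by row and column indexing, $F_1$ is the $4\times 8$ flattening whose rows are indexed by states at $(a_1,a_2)$ and columns by states at $(a_3,a_4,a_5)$, while $F_2$ is the $8\times 4$ flattening for the other split.

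The first step is to verify that all $3\times 3$ minors of $F_1$ and $F_2$ vanish on joint distributions arising from the model. This follows from a rank bound: for any internal edge $e$ with endpoint $v$ inducing split $A|B$, one can write $p_{\mathbf{i}} = \sum_{s\in\{0,1\}} \mathcal{P}(\proj_A(\mathbf{i}),\, v=s)\,\mathcal{P}(\proj_B(\mathbf{i})\mid v=s)$. This factorization expresses the flattening matrix as $F_{A|B} = U\diag(\boldsymbol{\pi}_v) V^T$ with $U$ of size $2^{|A|}\times 2$ and $V$ of size $2^{|B|}\times 2$, so $\mathrm{rank}(F_{A|B})\le 2$. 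Hence every $3\times 3$ minor of $F_1$ and of $F_2$ lies in the ideal of phylogenetic invariants.

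The substantive step, and the main obstacle, is the converse: showing these minors generate the full ideal, rather than merely lying inside it. I would invoke the general theorem of \citet{ARgm}, which asserts that for the \GMk model on any binary tree the ideal of invariants is generated by the $(k+1)\times(k+1)$ minors of the edge flattening matrices, one flattening per internal edge. Specializing to $k=2$ and the tree of Figure \ref{fig:5taxa}, this gives exactly the $3\times 3$ minors of $F_1$ and $F_2$. The heart of the general argument is an algebraic-geometric dimension and irreducibility check: one shows that the variety cut out by all edge flattening minors has the same dimension as the image of the parameterization map, and that both are irreducible, forcing equality of varieties and (after an appropriate primeness argument) equality of ideals.

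With this specialization, no further work is needed beyond confirming that the two matrices written in the statement genuinely are the edge flattenings for the two splits, which is visible by inspection of the indexing. The proof therefore reduces to a direct citation of \citet{ARgm} once the identification of $F_1$ and $F_2$ with the edge flattenings is made explicit.
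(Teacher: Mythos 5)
Your proposal is correct and takes essentially the same route as the paper: the paper states this result without proof, presenting it as a special case of the general theorem of \citet{ARgm} and noting afterward that $F_1$ and $F_2$ are exactly the flattenings along the two internal edges, which is precisely the identification you make before citing the same source. Your added verification that the $3\times 3$ minors vanish (via the rank-$2$ factorization through an internal node) and your sketch of the generation argument are harmless elaborations of what the paper simply delegates to the reference.
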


\smallskip

A few comments may make this theorem clearer.
The matrices $F_1$ and $F_2$ are the two natural 2-dimensional `flattenings' of the 5-dimensional joint distribution array according to the
splits corresponding to the two internal edges of the tree. The splits, are
$\{\{a_1,a_2\},\{a_3,a_4,a_5\}\}$, and
$\{\{a_1,a_2,a_3\},\{a_4,a_5\}\}$, and the indices of the matrix entries are such that
states are held constant in one of these sets as one moves across rows or down columns.

Recall that a $3\times 3$ minor of a matrix is defined as the determinant of
a $3\times 3$ submatrix obtained by deleting all but 3 rows and all but 3 columns.
Thus each of these matrices has $4\binom 83=224$ such minors.
Saying these 448 polynomials are phylogenetic invariants means that they evaluate to 0 on any distribution
arising from the model. We
view each of these polynomials 
as specifying an algebraic relationships between the various $p_\mathbf i$.

Of course these relationships imply algebraic relationships between the $q_\mathbf i$ as well.

\begin{cor}
Every $3\times 3$ minor of the two matrices $\widetilde F_1,\widetilde F_2$ obtained from $F_1,F_2$ by replacing all $p_{\mathbf i}$ by $q_{\mathbf i}$ equals zero, if the $q_{\mathbf i}$ arise from the \GMtwo model on the 5-taxon tree.
\end{cor}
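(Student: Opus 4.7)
The plan is to observe that the matrices $\widetilde F_j$ are simply scalar multiples of $F_j$, and then to use the homogeneity of the determinant. Specifically, I will first extend the definition of $q_{\mathbf i}$ to all patterns $\mathbf i \in [k]^n$ via equation (\ref{eq:qi}), namely $q_{\mathbf i} = p_{\mathbf i}/p$ where $p = \sum_{\mathbf j \in \mathcal I} p_{\mathbf j}$. (This is the same extension used in the proof of Theorem \ref{thm:7tax}; it is necessary here because $F_1$ and $F_2$ contain entries indexed by constant and other non-parsimony-informative patterns.) Since assumption (1) guarantees $p_{\mathbf i} > 0$ for all $\mathbf i$, we have $p > 0$ and the scalar $1/p$ is well-defined.

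Next I would observe that, entrywise, $\widetilde F_j = (1/p)\,F_j$ for $j=1,2$. Consequently, any $3\times 3$ submatrix of $\widetilde F_j$ is $1/p$ times the corresponding $3\times 3$ submatrix of $F_j$. Because the determinant of a $3\times 3$ matrix is homogeneous of degree $3$ in its entries, each $3\times 3$ minor of $\widetilde F_j$ equals $(1/p)^3$ times the corresponding $3\times 3$ minor of $F_j$.

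Finally, Theorem \ref{thm:2inv} states that every $3\times 3$ minor of $F_j$ is a phylogenetic invariant for the \GMtwo model, so all such minors vanish on any distribution $\{p_{\mathbf i}\}$ arising from that model on the given $5$-taxon tree. Multiplying by the nonzero constant $(1/p)^3$ preserves this vanishing, hence every $3\times 3$ minor of $\widetilde F_j$ also vanishes. There is no real obstacle in this argument; the only point requiring care is the implicit use of the extended definition of $q_{\mathbf i}$, since the matrices $F_1, F_2$ involve patterns outside $\mathcal I$.
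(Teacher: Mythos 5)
Your proposal is correct and follows exactly the paper's argument: the matrices $\widetilde F_j$ are the scalar multiples $(1/p)F_j$, so each $3\times 3$ minor is $(1/p)^3$ times the corresponding minor of $F_j$, which vanishes by Theorem \ref{thm:2inv}. Your explicit remark that the definition of $q_{\mathbf i}$ must be extended to patterns outside $\mathcal I$ is a point the paper only makes later (in the proof of Theorem \ref{thm:7tax}), but it is the same extension and does not change the argument.
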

\begin{proof}
Since the matrices with entries $q_{\mathbf i}$ are simply rescalings of those
with entries $p_{\mathbf i}$, this follows from the fact that determinants are homogeneous polynomials.
\end{proof}

Thus we know many algebraic relationships between the $q_\mathbf i$. We
now exploit these to determine the $q_\mathbf i$, $\mathbf i\notin \mathcal I$ from the
$q_\mathbf i$, $\mathbf i\in \mathcal I$.

\medskip

Consider first the matrix $\widetilde F_1$, where we use an underscore, as in `$\underline{ q_\mathbf i}$', to highlight those entries where $\mathbf i\notin \mathcal I$ (\emph{i.e.}, the entries we wish to determine).

$$\widetilde F_1=\begin{pmatrix} \underline{ q_{00000}}&\underline{ q_{00001}}&\underline{q_{00010}}& q_{00011}&\underline{ q_{00100}}&
q_{00101}& q_{00110}& q_{00111} \\\underline{q_{01000}}& q_{01001}& q_{01010}&
q_{01011}& q_{01100}& q_{01101}& q_{01110}&\underline{ q_{01111}}\\\underline{ q_{10000}}&
q_{10001}& q_{10010}& q_{10011}& q_{10100}& q_{10101}& q_{10110}&
\underline{q_{10111}}\\ q_{11000}& q_{11001}& q_{11010}&\underline{q_{11011}}& q_{11100}&
\underline{q_{11101}}& \underline{q_{11110}}& \underline{q_{11111}} \end{pmatrix}$$
Focusing on the minor using rows 2,3,4 and columns 2,3,4, we find
$$\begin{vmatrix}  q_{01001}& q_{01010}&
q_{01011}\\
q_{10001}& q_{10010}& q_{10011}\\  q_{11001}& q_{11010}&\underline{ q_{11011}}\end{vmatrix}=0.$$
Expanding the determinant in cofactors by the last column we have
$$q_{01011}\begin{vmatrix}
q_{10001}& q_{10010}\\  q_{11001}& q_{11010}\end{vmatrix} -
q_{10011}
\begin{vmatrix}  q_{01001}& q_{01010}\\
 q_{11001}& q_{11010}\end{vmatrix}+
\underline{ q_{11011}} \begin{vmatrix}  q_{01001}& q_{01010}\\
q_{10001}& q_{10010}\end{vmatrix}=0.$$
Thus,
provided $$\begin{vmatrix}  q_{01001}& q_{01010}\\
q_{10001}& q_{10010}\end{vmatrix}\ne 0,$$ we can express $\underline{q_{11011}}$ in terms of only $q_\mathbf i$ with $\mathbf i\in \mathcal I$.
Assuming the non-vanishing of this $2\times 2$ minor, then, we see
$\underline{ q_{11011}}$ is determined by the $q_\mathbf i$ for $\mathbf i\in \mathcal I$.
More generally, as long as any one of the three $2\times 2$ minors built from rows 2,3 and two of the columns 2,3,5 are non-zero, a similar argument shows $\underline{ q_{11011}},$
$\underline{q_{11101}},$  and $\underline{q_{11110}}$ can all be determined. Note that the non-vanishing
of at least one of these minors is equivalent to the condition that the $\{2,3\}$-$\{2,3,5\}$ submatrix
$$L_1=\begin{pmatrix}
q_{01001}& q_{01010}&q_{01100}\\
q_{10001}& q_{10010}&q_{10100}
\end{pmatrix}$$ has rank 2.

We similarly see that provided the $\{2,3\}$-$\{4,6,7\}$ submatrix $$L_2=\begin{pmatrix}  q_{01011}&q_{01101}& q_{01110}\\
q_{10011}&q_{10101}& q_{10110}\end{pmatrix}$$ has rank 2, then $\underline{ q_{00001}},$ $\underline{ q_{00010}},$ and $\underline{ q_{00100}}$ are also determined.

We now consider the other matrix,
$$\widetilde F_2=\begin{pmatrix}
\underline{ q_{00000}}&\underline{ q_{00001}}&\underline{ q_{00010}}&q_{00011}\\
\underline{q_{00100}}&q_{00101}&q_{00110}&q_{00111}\\
\underline{q_{01000}}&q_{01001}&q_{01010}&q_{01011}\\
q_{01100}&q_{01101}&q_{01110}&\underline{q_{01111}}\\
\underline{q_{10000}}&q_{10001}&q_{10010}&q_{10011}\\
q_{10100}&q_{10101}&q_{10110}&\underline{q_{10111}}\\
q_{11000}&q_{11001}&q_{11010}&\underline{q_{11011}}\\
q_{11100}&\underline{q_{11101}}&\underline{q_{11110}}&\underline{q_{11111}} \end{pmatrix}.  $$

Provided its $\{2,3,5\}$-$\{2,3\}$ and $\{4,6,7\}$-$\{2,3\}$ submatrices $$L_3=\begin{pmatrix} q_{00101}&q_{00110}\\
q_{01001}&q_{01010}\\
q_{10001}&q_{10010}\end{pmatrix} \text{ and }
L_4=\begin{pmatrix} q_{01101}&q_{01110}\\q_{10101}&q_{10110}\\
q_{11001}&q_{11010}\end{pmatrix}
$$ also have rank 2 we similarly can
determine $\underline{q_{00000}},$ $\underline{q_{01000}},$ $\underline{q_{10000}},$
$\underline{q_{10111}},$ $\underline{ q_{01111}},$ and $\underline{q_{11111}}$. Note that for
the determination of $\underline{ q_{00000}}$ and $\underline{ q_{11111}}$ we need values of some of the $\underline{q_\mathbf i}$ that have already been determined.

We've thus established
\begin{lem}\label{lem:qi}
Provided all 4 of the matrices

$$L_1=\begin{pmatrix}
q_{01001}& q_{01010}&q_{01100}\\
q_{10001}& q_{10010}&q_{10100}
\end{pmatrix},\ L_2=\begin{pmatrix}  q_{01011}&q_{01101}& q_{01110}\\
q_{10011}&q_{10101}& q_{10110}\end{pmatrix}$$

$$L_3=\begin{pmatrix} q_{00101}&q_{00110}\\
q_{01001}&q_{01010}\\
q_{10001}&q_{10010}\end{pmatrix} \
L_4=\begin{pmatrix} q_{01101}&q_{01110}\\q_{10101}&q_{10110}\\
q_{11001}&q_{11010}\end{pmatrix}
$$
have rank 2, then the $q_\mathbf i$, $\mathbf i\in \mathcal I $ determine all $q_\mathbf i$, $\mathbf i\in [k]^n$.
\end{lem}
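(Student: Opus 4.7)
The approach is a systematic iterative reconstruction of the non-parsimony-informative values $\underline{q_\mathbf{i}}$ from the parsimony-informative ones, using the vanishing of $3 \times 3$ minors of $\widetilde F_1$ and $\widetilde F_2$ guaranteed by the preceding corollary. For each target unknown $\underline{q_\mathbf{i}}$, I identify a $3 \times 3$ submatrix of $\widetilde F_1$ or $\widetilde F_2$ that contains it and whose other eight entries are either in $\mathcal I$ or have already been determined at a previous step. Setting its determinant to zero and expanding by cofactors along the row or column of $\underline{q_\mathbf{i}}$ yields a linear equation in $\underline{q_\mathbf{i}}$ whose coefficient is (up to sign) a $2 \times 2$ minor. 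Whenever that coefficient is nonzero, we solve for $\underline{q_\mathbf{i}}$.

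The solution proceeds in two rounds. In round one we handle those unknowns admitting a $3 \times 3$ submatrix whose other eight entries all lie in $\mathcal I$. The three unknowns $\underline{q_{11011}}, \underline{q_{11101}}, \underline{q_{11110}}$ in the bottom row of $\widetilde F_1$ come from submatrices on rows $2, 3, 4$ with two columns drawn from $\{2, 3, 5\}$ together with the column of the target; the coefficient emerging from the expansion is a $2 \times 2$ minor of $L_1$, so rank-2 of $L_1$ guarantees some valid column choice works. Symmetrically, rank-2 of $L_2$ yields $\underline{q_{00001}}, \underline{q_{00010}}, \underline{q_{00100}}$ from the top row of $\widetilde F_1$. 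Passing to $\widetilde F_2$, the interior column-$1$ unknowns $\underline{q_{01000}}, \underline{q_{10000}}$ are handled via submatrices containing the target row together with two rows from $\{4, 6, 7\}$ and columns $1, 2, 3$, producing coefficients that are $2 \times 2$ minors of $L_4$; symmetrically, $\underline{q_{01111}}, \underline{q_{10111}}$ in column $4$ of $\widetilde F_2$ are handled by $L_3$.

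The two corner entries $\underline{q_{00000}}$ and $\underline{q_{11111}}$ are not accessible in round one, because every $3 \times 3$ submatrix containing one of them carries additional non-parsimony-informative entries. Here the round-one output is essential: for $\underline{q_{00000}}$ I would take the $\widetilde F_2$ submatrix on rows $1, 4, 6$ with columns $1, 2, 3$, whose row-one entries other than $\underline{q_{00000}}$ are the already-determined $\underline{q_{00001}}$ and $\underline{q_{00010}}$; the coefficient of $\underline{q_{00000}}$ in the cofactor expansion is a $2 \times 2$ minor of $L_4$. Dually, $\underline{q_{11111}}$ is recovered from the submatrix on rows $3, 5, 8$ and columns $2, 3, 4$, whose row-eight entries other than $\underline{q_{11111}}$ are the round-one outputs $\underline{q_{11101}}, \underline{q_{11110}}$, with coefficient a $2 \times 2$ minor of $L_3$.

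The substantive content is this enumeration rather than the algebra, since cofactor expansion is trivial. The main point to verify is that, under the four rank-2 hypotheses, every non-parsimony-informative pattern is indeed reached by this two-round scheme, and that the coefficient forced by each expansion is a $2 \times 2$ minor of one of the $L_i$. Since rank 2 of $L_i$ is exactly the assertion that some $2 \times 2$ minor of $L_i$ is nonzero, the hypotheses translate directly into the nonvanishings required at each step. I expect no serious obstacle beyond this bookkeeping; the hardest part will simply be confirming that no target is left stranded by an unlucky configuration of zeros within an $L_i$.
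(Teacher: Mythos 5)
Your proposal is correct and follows essentially the same route as the paper: the same cofactor expansions of vanishing $3\times 3$ minors of $\widetilde F_1$ and $\widetilde F_2$, the same assignment of $L_1,L_2$ to the row-$1$ and row-$4$ unknowns of $\widetilde F_1$ and of $L_3,L_4$ to the column-$4$ and column-$1$ unknowns of $\widetilde F_2$, and the same observation that the two constant patterns must be recovered last using previously determined values. The only cosmetic caveat is that your illustrative submatrices (e.g.\ rows $1,4,6$ for $\underline{q_{00000}}$) should be read as ``some two rows of $\{4,6,7\}$ realizing a nonvanishing minor of $L_4$,'' which you already acknowledge elsewhere.
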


\medskip

Combined with the argument like that for Theorem \ref{thm:7tax}, this lemma may be used to quickly establish that numerical parameters are
generically identifiable for both the \GMtwop and \Mtwop models on 5-taxon trees. \emph{Generic identifiability} means that the subset of parameter space for which identifiability may not hold is of measure zero within the full parameter space.
By Lemma \ref{lem:qi}, numerical parameter identifiability may fail only when at least one of the four matrices has rank $\le 2$, a condition which can be equivalently phrased in terms of the vanishing of a finite set of 
polynomials in the $q_{\mathbf i}$, obtained as certain products of  $2\times 2$ minors of the $L_i$. Composing these polynomials with
the polynomial parameterization map for the model, we find the set of all non-identifiable parameter choices lies within the zero set of a finite set of polynomials, \emph{i.e.}, it lies within an algebraic variety. Exhibiting  a single choice of parameters for which these matrices all have rank 2, then, will establish that this is a
proper subvariety of parameter space, and hence is of lower dimension than the full parameter space,
with Lebesgue measure zero. 
Though we omit presenting such an example  here, it is easy to choose rational parameter values and calculate with exact arithmetic to establish that such examples exist.

\medskip

We next investigate for what parameters any of the matrices $L_i$ of Lemma \ref{lem:qi}
has rank $<2$. This will establish generic identifiability in another way, by giving an
explicit characterization of those parameters for which identifiability might not hold. 
Although our analysis will not give
complete understanding of all cases, we show that
while generic parameters are identifiable, there are indeed cases of \GMtwop
parameters that are not identifiable.

\medskip

Consider first the submatrix
$$L_1=\begin{pmatrix}
q_{01001}& q_{01010}&q_{01100}\\
q_{10001}& q_{10010}&q_{10100}
\end{pmatrix},$$ and root the tree at the internal node closest to $a_1$ and $a_2$ in Figure \ref{fig:5taxa}. We use
$M_i$ for the Markov matrix on the terminal edge to $a_i$, $M_6$ for the Markov matrix on the
internal edge leading from the root, and $M_7$ for the Markov matrix on the other internal edge.
Let $$C_1=\begin{pmatrix} M_1(0,0)M_2(0,1)&M_1(0,1)M_2(0,0)\\M_1(1,0)M_2(1,1)&M_1(1,1)M_2(1,0)\end{pmatrix},$$
and
$$C_2=\begin{pmatrix} M_4(0,0)M_5(0,1)&M_4(0,1)M_5(0,0)\\M_4(1,0)M_5(1,1)&M_4(1,1)M_5(1,0)\end{pmatrix}.$$

Then \begin{equation}
L_1=C_1^T\diag( \boldsymbol \pi)M_6 D_1,\label{eq:N1}\end{equation} where
$$D_1=\begin{pmatrix} \mathbf b_1&\mathbf b_2&\mathbf b_3\end{pmatrix},$$
is a $2\times 3$ matrix with columns $\mathbf b_i$ given by
$$\begin{pmatrix} \mathbf b_1&\mathbf b_2\end{pmatrix}=\diag(M_3(\cdot,0))M_7 C_2$$ and
$$\mathbf b_3=\diag(M_3(\cdot,1))M_7 \begin{pmatrix} M_4(0,0)M_5(0,0)\\M_4(1,0)M_5(1,0)\end{pmatrix}.$$
(Here $M(\cdot, i)$ denotes the $i$th column of $M$.)

Thus the first two columns of $L_1$ are given by $$C_1^T\diag( \boldsymbol \pi)M_6\diag(M_3(\cdot,0))M_7 C_2.$$ Note all matrices in this product have rank 2 except possibly the $C_i$. Thus if both $C_i$ have rank 2, so does $L_1$.

A similar argument applies to the other $L_i$, yielding the following explicit statement of
generic identifiability

\begin{thm} \label{thm:genid+} The model \GMtwop has identifiable numerical parameters for all parameter values such that both $C_1$ and $C_2$ have rank 2. \end{thm}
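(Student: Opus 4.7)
The plan is to reduce to Lemma \ref{lem:qi}: once we verify that the rank-$2$ assumption on $C_1$ and $C_2$ forces each of the four matrices $L_1, L_2, L_3, L_4$ to have rank $2$, then Lemma \ref{lem:qi} gives us every $q_\mathbf{i}$ from the parsimony-informative ones, and the rest of the argument runs exactly as in the proof of Theorem \ref{thm:7tax}. That is, we recover all $p_\mathbf{i}=p\,q_\mathbf{i}$ by setting $p=\bigl(\sum_{\mathbf{i}\in[2]^5}q_\mathbf{i}\bigr)^{-1}$, and then invoke the identifiability theorem of \citet{MR97k:92011} for \GMtwo to conclude that the numerical parameters are identifiable up to the choice of root and permutation of states at internal nodes.

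The real content is therefore to establish that, under the assumption that $C_1$ and $C_2$ have rank $2$, each $L_i$ has rank $2$. The paper has already carried this out for $L_1$ via the factorization (\ref{eq:N1}): the first two columns of $L_1$ equal $C_1^T\diag(\boldsymbol\pi)\,M_6\,\diag(M_3(\cdot,0))\,M_7\,C_2$, and since $\boldsymbol\pi$ has positive entries, $M_6$ and $M_7$ are non-singular, and $\diag(M_3(\cdot,0))$ has strictly positive diagonal, this $2\times 2$ sub-block has rank $2$, forcing $L_1$ itself to have rank $2$. My proposal is to obtain entirely analogous factorizations for the remaining three matrices by tracking which $(a_1,a_2)$ and $(a_4,a_5)$ patterns index their rows and columns.

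For $L_2$, the rows are again indexed by $(a_1,a_2)\in\{01,10\}$ so the left factor is again $C_1^T$, while its second and third columns correspond to $(a_3,a_4,a_5)\in\{101,110\}$, i.e.\ $a_3=1$ and $(a_4,a_5)\in\{01,10\}$. Hence the same chain of factors applies with $\diag(M_3(\cdot,0))$ replaced by $\diag(M_3(\cdot,1))$, and that $2\times 2$ sub-block again has rank $2$. For $L_3$ and $L_4$, I would reroot the tree at the node adjacent to $a_4$ and $a_5$ and apply the symmetric derivation: now $C_2^T$ appears on one side (since the columns, or equivalently rows after transposition, run over $(a_4,a_5)\in\{01,10\}$) and a two-row sub-block of the other factor passes through $C_1$ (since two of the three rows with $a_3$ fixed have $(a_1,a_2)\in\{01,10\}$). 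In each of the four cases, the rank-$2$ conclusion follows from non-singularity of $\boldsymbol\pi$, $M_6$, $M_7$, and the positivity of the relevant $M_3$ diagonal, together with the hypothesis that $C_1$ and $C_2$ have rank $2$.

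The main obstacle is purely bookkeeping: for each $L_i$ one has to identify the correct two-column (or two-row) sub-block whose indexing matches that of $C_1$ or $C_2$, pick the corresponding diagonal $\diag(M_3(\cdot,0))$ or $\diag(M_3(\cdot,1))$, and reroot if necessary so the factorization \eqref{eq:N1} takes its cleanest form. Once this is done case-by-case, no further algebraic work is needed, and the theorem follows by combining Lemma \ref{lem:qi} with the $p_\mathbf{i}$-recovery step and Chang's theorem as in the proof of Theorem \ref{thm:7tax}.
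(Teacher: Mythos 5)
Your proposal is correct and follows essentially the same route as the paper: factor a $2\times 2$ sub-block of each $L_i$ as $C_1^T\diag(\boldsymbol\pi)M_6\diag(M_3(\cdot,\epsilon))M_7C_2$ so that $C_1,C_2$ are the only possibly singular factors, conclude each $L_i$ has rank $2$, and then combine Lemma \ref{lem:qi} with the $p_{\mathbf i}$-recovery and Chang's theorem as in Theorem \ref{thm:7tax}. (The rerooting you propose for $L_3,L_4$ is harmless but unnecessary, since their relevant $2\times 2$ sub-blocks coincide entrywise with those already extracted from $L_1,L_2$.)
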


We now investigate under what circumstances the $C_i$ fail to be of rank 2.
With $$M_1=\begin{pmatrix} 1-a_1&a_1\\a_2&1-a_2\end{pmatrix}, \
M_2=\begin{pmatrix} 1-b_1&b_1\\b_2&1-b_2\end{pmatrix},$$ where $0<a_i,b_i<1$,
$$C_1=\begin{pmatrix} (1-a_1)b_1&a_1(1-b_1)\\a_2(1-b_2)&(1-a_2)b_2\end{pmatrix}.$$
Thus  $\det C_1=0$ means $(1-a_1)(1-a_2)b_1b_2=a_1a_2(1-b_1)(1-b_2)$, so
$$\frac {a_1a_2}{(1-a_1)(1-a_2)}=\frac {b_1b_2}{(1-b_1)(1-b_2)}.$$
Letting $\alpha_i= \frac {a_i}{1-a_i}$ and $\beta_i=\frac {b_i}{1-b_i}$, then $0<\alpha_i,\beta_i<\infty$ and these are in  1-1 correspondence with $a_i,b_i$. We now have

\begin{lem}\label{lem:badp}
The matrix $C_1$ has rank 1 if, and only if, $\alpha_1\alpha_2=\beta_1\beta_2$.
\end{lem}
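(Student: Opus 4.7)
The statement is an elementary computation, so the plan is short and there is really no serious obstacle to overcome. I would simply compute the determinant of $C_1$ and rearrange.

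First I would observe that since $0 < a_i, b_i < 1$, every entry of $C_1$ is strictly positive; in particular $C_1 \ne 0$, so $C_1$ has rank at least $1$. Because $C_1$ is $2 \times 2$, its rank equals $1$ if and only if $\det C_1 = 0$. So the task reduces to rewriting the condition $\det C_1 = 0$ in the form stated.

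Next, I would expand the determinant directly from the given formula for $C_1$, obtaining
$$\det C_1 = (1-a_1)(1-a_2) b_1 b_2 - a_1 a_2 (1-b_1)(1-b_2).$$
The quantity $(1-a_1)(1-a_2)(1-b_1)(1-b_2)$ is strictly positive, so dividing the equation $\det C_1 = 0$ by it gives
$$\frac{a_1 a_2}{(1-a_1)(1-a_2)} \;=\; \frac{b_1 b_2}{(1-b_1)(1-b_2)},$$
which, under the substitutions $\alpha_i = a_i/(1-a_i)$ and $\beta_i = b_i/(1-b_i)$ introduced just before the lemma, is exactly $\alpha_1 \alpha_2 = \beta_1 \beta_2$. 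Both directions of the equivalence follow from this chain of reversible manipulations, and the hypotheses $0 < a_i, b_i < 1$ guarantee the bijective correspondence between $(a_i, b_i)$ and $(\alpha_i, \beta_i) \in (0, \infty)^2$ that makes the substitution legitimate.

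The hardest part, really, is just bookkeeping: making sure that the positivity assumptions on the $a_i$ and $b_i$ are genuinely used (to justify dividing by $(1-a_1)(1-a_2)(1-b_1)(1-b_2)$ and to rule out the rank-$0$ case), and nothing more subtle is involved. No phylogenetic-invariant machinery, no appeal to earlier results in the paper, and no case analysis on the structure of $M_6$ or $M_7$ is needed at this stage; the lemma is a purely algebraic statement about the $2 \times 2$ matrix $C_1$.
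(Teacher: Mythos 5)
Your proof is correct and follows essentially the same route as the paper, which also simply expands $\det C_1 = (1-a_1)(1-a_2)b_1b_2 - a_1a_2(1-b_1)(1-b_2)$, sets it to zero, and divides through to obtain $\alpha_1\alpha_2 = \beta_1\beta_2$. Your explicit remarks that positivity of the entries rules out rank $0$ and justifies the division are a slight (and welcome) tightening of the paper's argument, but not a different approach.
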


Thus to find examples where $C_1$ has rank 1 we may pick $M_1$ (equivalently $\alpha_1, \alpha_2$, or $a_1,a_2$) arbitrarily, and then have only one free parameter to pick $M_2$ (equivalently, we may pick $\beta_1$ or $b_1$, and then $\beta_2$ and $b_2$ are determined).

If we avoid such `bad' parameter choices for both the Markov matrices on the cherry of taxa 1 and 2 and
the Markov matrices on the cherry of taxa 4 and 5, then \GMtwop has identifiable parameters.

\begin{cor}\label{cor:badp}
Numerical parameters of the model \GMtwop on the 5-taxon tree are identifiable except possibly on a codimension 1
algebraic subvariety of parameter
space. This subvariety is the union of 2 irreducible varieties, one is explicitly characterized by the condition of Lemma \ref{lem:badp} on the Markov matrices $M_1,M_2$, and the other by a similar condition on $M_4,M_5$.
\end{cor}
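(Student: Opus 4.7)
The plan is to combine the sufficient condition for identifiability from Theorem \ref{thm:genid+} with the explicit description of its failure in Lemma \ref{lem:badp}, and then to invoke elementary algebraic geometry to show the resulting exceptional locus has the claimed shape.

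By Theorem \ref{thm:genid+}, the set of parameters at which identifiability could fail lies inside $V_1\cup V_2$, where
$$V_i=\{\theta\in\Theta\,:\,\det C_i=0\}$$
and $\Theta$ is the numerical parameter space of \GMtwop on the 5-taxon tree. Each $V_i$ is the zero locus inside $\Theta$ of a single polynomial, so is automatically a codimension-at-most-$1$ algebraic subvariety. It remains to show that each $V_i$ is a proper, irreducible subvariety of $\Theta$, and that $V_1\neq V_2$.

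I would analyze $V_1$ through the substitution $\alpha_i=a_i/(1-a_i)$, $\beta_i=b_i/(1-b_i)$ from Lemma \ref{lem:badp}, which is a diffeomorphism from the open parameter box in $(a_1,a_2,b_1,b_2)$ onto the positive orthant in $(\alpha_1,\alpha_2,\beta_1,\beta_2)$. A direct calculation factors
$$\det C_1=\frac{\beta_1\beta_2-\alpha_1\alpha_2}{(1+\alpha_1)(1+\alpha_2)(1+\beta_1)(1+\beta_2)},$$
whose denominator never vanishes on $\Theta$, so $V_1$ coincides with the zero set of the binomial $\alpha_1\alpha_2-\beta_1\beta_2$. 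This quadratic binomial is irreducible, e.g.\ because its associated symmetric bilinear form has full rank, so the projective quadric it defines is smooth. A single parameter choice with $\alpha_1\alpha_2\neq\beta_1\beta_2$ then confirms $V_1$ is a proper subvariety of $\Theta$, and hence of codimension exactly one. An identical argument, with $(M_4,M_5)$ replacing $(M_1,M_2)$, treats $V_2$.

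Finally, since $V_1$ and $V_2$ are cut out by polynomials in disjoint subsets of the numerical parameters, they are distinct, so $V_1\cup V_2$ is the union of two irreducible codimension-$1$ subvarieties of $\Theta$ of exactly the form stated. The one step I expect to require care is the irreducibility of $\alpha_1\alpha_2-\beta_1\beta_2$; once the associated quadric is verified to be smooth (or, alternatively, one checks by hand that no nontrivial linear factorization is possible), the rest is assembly from results already in hand.
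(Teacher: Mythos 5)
Your proposal is correct and follows essentially the same route as the paper, which states Corollary \ref{cor:badp} as an immediate consequence of Theorem \ref{thm:genid+} together with the rank-1 characterization in Lemma \ref{lem:badp}. The only addition is your explicit verification that $\alpha_1\alpha_2-\beta_1\beta_2$ is irreducible (a rank-4 quadratic form, hence a smooth quadric), a point the paper asserts without proof; this check is sound, and the rest of your argument matches the paper's.
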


\medskip

We next investigate whether identifiability actually fails for the parameter choices
indicated in the corollary, or if it is only our proof that fails.

Consider the extreme case where $M_1,M_2,M_4,M_5$ have been chosen so that both $C_1$ and $C_2$ have rank 1. Then from an expression similar to equation (\ref{eq:N1}), the fact that $C_1$ has rank 1
implies that the middle two
rows of the matrix $F_1$, and hence of $\widetilde F_1$, must be dependent.
Thus if we knew the second row of $\widetilde F_1$, and one of the entries in the third row, we could determine
the rest of the third row. Similar comments apply to the middle two columns of $\widetilde F_2$, using that $C_2$
is of rank 1.

This observation shows that if we project from the 20 coordinates $\{q_{\mathbf i}\}_{\mathbf i\in \mathcal I}$ to the 12 coordinates shown in the array
$$\begin{pmatrix} -&-&-& q_{00011}&-&
q_{00101}& *& q_{00111} \\-& q_{01001}& q_{01010}&
q_{01011}& q_{01100}& q_{01101}& *&-\\-&
q_{10001}& *&*& *& *& *&
-\\
q_{11000}& q_{11001}& *&-& q_{11100}&
-& -& - \end{pmatrix}$$
obtained by deleting entries in $\widetilde F_1$, then this projection will be injective on distributions
arising from \GMtwop parameters for which both $C_1$ and $C_2$ have rank 1. In the above array `$-$' marks parsimony-noninformative entries, and `$*$' parsimony-informative ones that can be inferred from other entries shown under the assumption that $C_1$ and $C_2$ have rank 1. 
To establish that \GMtwop is not identifiable for all parameters,
it is thus enough to argue that if we know $C_1$ and $C_2$ have rank 1, identifiability of parameters is impossible from these 12
coordinates. 

Note that the restricted parameter space for the \GMtwop model where $C_1,C_2$ have rank 1 has
dimension 13: the sum of $2\cdot2-1=3$ parameters for each cherry, 2 parameters for each of the 3 other edges, and 1 parameter for the root distribution.
Thus each 13-dimensional neighborhood of a point in the interior of the restricted parameter space
has an image that is of dimension at most 12. Thus the parameters cannot be identifiable, as the map is
infinite-to-one.

\begin{prop}
There exist distributions arising from the \GMtwop model on a 5-taxon tree with infinite fiber under the parameterization map.
That is, infinitely many choices of parameters can lead to the same distribution.
\end{prop}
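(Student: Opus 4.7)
The plan is to make precise the dimension-count argument already sketched in the paragraphs preceding the statement. In outline, the parameter space for \GMtwop on which both $C_1$ and $C_2$ have rank $1$ is $13$-dimensional, while any distribution arising from such parameters is determined by only $12$ of the coordinates $q_{\mathbf i}$, $\mathbf i \in \mathcal I$. Hence the parameterization map restricted to this locus factors through a $12$-dimensional target, and a generic fiber must be positive-dimensional.

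First I would carefully enumerate the restricted parameter space. By Lemma \ref{lem:badp}, the condition $\det C_1 = 0$ cuts out a codimension-$1$ irreducible subvariety of the $(M_1,M_2)$-parameter space; since $M_1,M_2$ together contribute $4$ free entries, the corresponding restricted subspace has dimension $3$. The same holds for $(M_4,M_5)$ with $C_2$. Adding the $2$ free parameters from each of $M_3,M_6,M_7$ and the $1$ parameter from $\boldsymbol\pi$, the restricted parameter space has dimension $3 + 3 + 6 + 1 = 13$.

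Next I would verify that when both $C_1$ and $C_2$ have rank $1$, the middle two rows of the flattening $\widetilde F_1$ are linearly dependent, and similarly the middle two columns of $\widetilde F_2$. This is the content of the observation just before the statement: the entries marked with $*$ in the displayed array become determined by the non-starred entries, so the entire restriction $\{q_{\mathbf i}\}_{\mathbf i\in \mathcal I}$ is recoverable from those $12$ coordinates. Let $\pi_{12}$ denote the coordinate projection onto them, and let $\psi$ denote the polynomial parameterization of \GMtwop. Then, on the restricted parameter space, $\psi$ factors through $\pi_{12}\circ \psi$, whose image lies in $\mathbb R^{12}$.

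Finally, I would invoke the standard fact that a polynomial map from an irreducible $13$-dimensional variety into $\mathbb R^{12}$ has generic fiber of dimension at least $1$; at any smooth point of the restricted parameter space where the differential of $\pi_{12}\circ \psi$ attains maximal rank, its kernel supplies a positive-dimensional family of parameters mapping to a single point. Any image point over this generic locus then yields a distribution with infinitely many preimages under $\psi$. The main obstacle, which I expect to be the most delicate part, is to confirm that this positive-dimensional fiber is genuinely producing new parameters and is not absorbed by the well-known discrete ambiguities of \GMk (rerooting and permutation of hidden states); since those symmetries form a finite group they cannot account for a $1$-dimensional family, so the conclusion is genuine.
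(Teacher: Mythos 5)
Your proposal is correct and follows essentially the same route as the paper: the paper's own argument is exactly this dimension count (the $13$-dimensional locus where $C_1$ and $C_2$ have rank $1$ maps into an image determined by only $12$ coordinates, so the parameterization is infinite-to-one there), and you have merely made the final step explicit via the fiber-dimension theorem. Your added remark that the finite rerooting/state-permutation symmetries cannot absorb a positive-dimensional fiber is a sensible and correct supplement, though not required for the statement as phrased.
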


We now use our earlier theorems, which have all concerned the model \GMtwo, to deduce results on
the model
\Mtwop. 

To specialize  Corollary \ref{cor:badp} to \Mtwop,
note that the condition of Lemma \ref{lem:badp} simplifies to $M_1=M_2$ for this model. Thus we 
obtain the following.

\begin{cor}
For the \Mtwop model on the 5-taxon tree, suppose $M_1\ne M_2$ and $M_4\ne M_5$. Then the
numerical parameters are identifiable.
\end{cor}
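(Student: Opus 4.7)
The plan is to deduce this from the identifiability result for \GMtwop (Theorem \ref{thm:genid+}), by specializing the failure condition of Lemma \ref{lem:badp} to the more restrictive M2 model. Since \Mtwop is a submodel of \GMtwop, any injectivity statement for the \GMtwop parameterization automatically yields injectivity when restricted to \Mtwop parameters, so the main task is to verify that the parameter region where our \GMtwop-based argument works includes every \Mtwop choice satisfying the hypotheses.

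First I would write out the \Mtwop-specialization of the parameters $a_1,a_2,b_1,b_2$ appearing in Lemma \ref{lem:badp}. For \Mtwop, each Markov matrix has equal off-diagonal entries, so on the edges to taxa $a_1$ and $a_2$ we have $a_1=a_2=:a$ and $b_1=b_2=:b$, with $a,b\in (0,1/2)$ in the continuous-time version or $(0,1)$ in general. Then $\alpha_1=\alpha_2=a/(1-a)$ and $\beta_1=\beta_2=b/(1-b)$, so the rank-drop condition $\alpha_1\alpha_2=\beta_1\beta_2$ reduces to $(a/(1-a))^2=(b/(1-b))^2$. Since both sides are positive, this is equivalent to $a/(1-a)=b/(1-b)$, hence to $a=b$, hence to $M_1=M_2$. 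Under the hypothesis $M_1\ne M_2$, therefore, $C_1$ has rank $2$; by the symmetric argument on the cherry containing $a_4,a_5$, the hypothesis $M_4\ne M_5$ forces $C_2$ to have rank $2$.

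With both $C_1$ and $C_2$ of rank $2$, Theorem \ref{thm:genid+} applies and gives identifiability of the \GMtwop numerical parameters at this point of the parameter space. Since an \Mtwop distribution equals a \GMtwop distribution with a specific structured choice of Markov matrices and uniform root distribution, and since the \GMtwop parameterization is injective here, the \Mtwop parameters producing this distribution are unique. Finally, as noted in the discussion following Theorem \ref{thm:7tax}, the highly structured form of the M2 Markov matrices (equal off-diagonal entries) makes the permutation-of-internal-states ambiguity inherent in \GMk identifiability vacuous, so identifiability for \Mtwop is complete rather than only up to relabeling.

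I do not expect a significant obstacle: the heavy lifting has been done in Theorem \ref{thm:genid+} and Lemma \ref{lem:badp}, and all that remains is the elementary algebraic observation that the condition $\alpha_1\alpha_2=\beta_1\beta_2$ collapses to $M_1=M_2$ under the symmetry assumption of the \Mk model. The only point to state carefully is the passage from \GMtwop identifiability to \Mtwop identifiability, to ensure the reader that restricting to a submodel preserves injectivity and that the internal-state relabeling ambiguity is trivial for \Mtwop.
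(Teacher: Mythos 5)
Your proposal is correct and follows essentially the same route as the paper: the paper's proof is precisely the observation that the rank-drop condition of Lemma \ref{lem:badp} collapses to $M_1=M_2$ under the equal-off-diagonal-entry structure of the M2 matrices, so the hypotheses force $C_1$ and $C_2$ to have rank $2$ and Theorem \ref{thm:genid+} applies. Your additional remarks on restriction to a submodel and on the resolution of the internal-state permutation ambiguity match the paper's own discussion following Theorem \ref{thm:7tax}.
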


Rather interestingly, in the case of
a molecular clock assumption, with a root located anywhere on the tree, the potential bad cases in the statement above, $M_1=M_2$ or $M_4=M_5$, actually arise. It is an open question whether identifiability actually fails for \Mtwop in such cases. This underscores that what may appear to be the simplest biological assumptions may well lead to undesirable mathematical behavior, due to special symmetries.

\section{Identifiability of numerical parameters: large trees}\label{sec:large}

We turn now to establishing Lemma \ref{lem:large}, the key technical point needed in the proof of Theorem \ref{thm:7tax}. 
While the method of proof of is similar to what appears in Appendix \ref{sec:5taxa}, we generalize to models with an arbitrary number of states,
and deal with larger trees in order to avoid obtaining a theorem that only holds for generic parameters. This complicates the presentation, but introduces few new ideas.

\smallskip

We require some additional terminology.

\begin{dfn}
A binary tree is said to have an \emph{$(m,n)$ split} if deleting one edge partitions the taxa into sets of size $m$ and $n$ according to connected components of the resulting graph. A non-binary tree
is said to have an $(m,n)$ \emph{split} if some binary resolution of it does.
\end{dfn}

\begin{lem} $T$ has at least $7$ taxa if, and only if, $T$ has a $(m,n)$ split with $m\ge 4$ and $n\ge 3$.
\end{lem}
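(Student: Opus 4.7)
My plan is to handle the two directions separately. The reverse direction is immediate: if $T$ has an $(m,n)$ split with $m\ge 4$ and $n\ge 3$, then by the definition of a split, some binary resolution of $T$ has an edge separating the taxa into blocks of size $m$ and $n$, so $T$ has $m+n\ge 7$ taxa.

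For the forward direction, suppose $T$ has at least $7$ taxa. Since any binary resolution of $T$ has the same leaf set, it suffices to fix an arbitrary binary resolution $T'$ and exhibit an internal edge of $T'$ whose removal partitions the $N\ge 7$ leaves into blocks of sizes $a$ and $N-a$ with $a\ge 3$ and $N-a\ge 4$. The key step is to use a \emph{centroid edge}: among all internal edges of $T'$, choose one $e^*$ maximizing $\min(a, N-a)$, and let $a\le N-a$ be its smaller side.

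The technical heart of the argument is a local ``no-better-neighbor'' estimate showing $a\ge N/3$. Consider the endpoint of $e^*$ lying on the larger side; since $T'$ is binary, two additional internal or pendant edges $e_1,e_2$ emanate from this node, and these further split the $N-a$ leaves on that side into two groups of sizes $b_1$ and $b_2$ with $b_1+b_2=N-a$. Each $e_i$ is an edge of $T'$ inducing the split of sizes $(a+b_i, N-a-b_i)$, so by maximality of $e^*$ we need $\min(a+b_i,N-a-b_i)\le a$. Since $a+b_i>a$, this forces $N-a-b_i\le a$, i.e., $b_i\ge N-2a$, for $i=1,2$. Summing, $N-a=b_1+b_2\ge 2N-4a$, which rearranges to $a\ge N/3$. (A small point: we must ensure at least one internal edge exists, but any binary tree on $N\ge 4$ leaves has one.)

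Finally, I plug in $N\ge 7$. The bound $a\ge N/3\ge 7/3$ gives $a\ge 3$. For $N=7$, the strict inequality $a\le N/2=3.5$ forces $a\le 3$, hence $N-a\ge 4$; for $N\ge 8$, $N-a\ge N/2\ge 4$ directly. Either way $e^*$ produces the desired $(m,n)$ split with $m=N-a\ge 4$ and $n=a\ge 3$. I expect the main obstacle to be simply packaging the centroid-edge inequality cleanly; the rest is bookkeeping on the boundary case $N=7$.
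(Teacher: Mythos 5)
Your proof is correct, but it takes a genuinely different route from the paper's. The paper argues by a structural case analysis on the number of cherries in a binary resolution of $T$: with exactly two cherries the tree is a caterpillar, with exactly three it is obtained by grafting onto $((a,b),(c,d),(e,f))$, and with four or more by grafting onto $(((a,b),(c,d)),((e,f),(g,h)))$, and in each case the desired split is simply read off. You instead prove the standard centroid-edge (one-third/two-thirds separator) bound: the edge maximizing the smaller side of its induced split has smaller side $a\ge N/3$, and then $N\ge 7$ together with $a\le N/2$ forces $a\ge 3$ and $N-a\ge 4$. Your argument is more quantitative and arguably more rigorous than the paper's repeated ``the forward implication is clear,'' and it generalizes immediately to balanced splits for larger $N$; the paper's version is shorter and avoids any extremal machinery. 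One small point to tidy: you take the maximum of $\min(a,N-a)$ only over \emph{internal} edges, yet you invoke that maximality for the two edges emanating from the far endpoint of $e^*$, which may be pendant. This is harmless --- if such an edge is pendant its far side has exactly one leaf, so the needed inequality $b_j\le a$ holds trivially; alternatively, maximize over \emph{all} edges, noting the maximizer is automatically internal once $N\ge 4$ because every internal edge has both sides of size at least $2$ while a pendant edge has smaller side $1$ --- but it should be said explicitly.
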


\begin{proof} We may assume $T$ is binary. Suppose first $T$ has at least $7$ taxa. We consider three cases based on the number of cherries in $T$.

If $T$ has exactly two cherries, then $T$ is a caterpillar tree and the forward implication is clear.

If $T$ has exactly three cherries, then $T$ is obtained by grafting one
or more additional edges to interior edges of the tree $((a,b),(c,d),(e,f))$ and the forward implication is again clear.

If $T$ has four or more cherries, then $T$ is obtained by grafting rooted trees to the tree
$(((a,b),(c,d)),((e,f),(g,h))$ and the forward implication is clear.

The converse is clear.
\end{proof}

We use this to prove the lemma which is the key ingredient of Theorem \ref{thm:7tax}.

\begin{lem}\label{lem:large} Suppose $T$ is an $n$-taxon tree with $n\ge 7$. Then the $q_\mathbf i$ for $\mathbf i\in \mathcal I$ arising from some choice of \GMp parameters on $T$ uniquely determine
the $q_\mathbf i$ for $\mathbf i\notin \mathcal I$.
\end{lem}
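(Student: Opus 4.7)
The plan is to establish Lemma \ref{lem:large} by pushing the flattening argument of Appendix \ref{sec:5taxa} to larger splits, so that enough of the flattening's entries are known (parsimony-informative) to force the remaining entries via the rank constraint.

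By the previous lemma, we may choose (after fixing, if necessary, a binary resolution of $T$) an $(m,n)$ split of $T$ with $m\ge 4$ and $n\ge 3$. Let $F=\mathrm{Flat}(Q)$ denote the $k^m\times k^n$ flattening of $Q$ with respect to this split, where the hitherto undefined entries $q_\mathbf i$ with $\mathbf i\notin\mathcal I$ are given, as in the proof of Theorem \ref{thm:7tax}, by $q_\mathbf i=p_\mathbf i/p$. Under the positivity of $\boldsymbol\pi$ and non-singularity of the Markov matrix on the splitting edge, $\mathrm{Flat}(P)$ factors as a product of two rank-$k$ matrices (\emph{cf.}~\citet{AR03}), and so $F$, being a scalar multiple of $\mathrm{Flat}(P)$, has rank exactly $k$. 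Consequently every $(k+1)\times(k+1)$ minor of $F$ vanishes; this supplies the family of polynomial relations I will use.

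The central observation is that the hypothesis $m\ge 4$ guarantees the existence of $m$-subpatterns $\mathbf j$ which are themselves parsimony-informative on their $m$ taxa. For any such $\mathbf j$, every entry of the $\mathbf j$-row of $F$ has the form $q_{(\mathbf j,\mathbf l)}$ with $(\mathbf j,\mathbf l)\in\mathcal I$, hence is known. I would first fix a set $R=\{\mathbf j_1,\dots,\mathbf j_k\}$ of $m$-parsimony-informative subpatterns whose corresponding rows of $F$ are linearly independent; existence of such an $R$ follows from the rank-$k$ factorization once one checks that the rows of $F$ indexed by $m$-parsimony-informative subpatterns still span the full row space, a mild point that uses the non-singularity of the internal Markov matrices together with $m\ge 4$. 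For each unknown $q_{(\mathbf j,\mathbf l)}$, I would then select columns $\mathbf l_1,\dots,\mathbf l_k$ such that (i) each $(\mathbf j,\mathbf l_s)$ lies in $\mathcal I$, and (ii) the $k\times k$ submatrix of $F$ on rows $R$ and columns $\{\mathbf l_1,\dots,\mathbf l_k\}$ is non-singular. Cofactor expansion of the vanishing $(k+1)\times(k+1)$ minor on rows $R\cup\{\mathbf j\}$ and columns $\{\mathbf l,\mathbf l_1,\dots,\mathbf l_k\}$ along the $\mathbf j$-row then solves uniquely for $q_{(\mathbf j,\mathbf l)}$ as a rational function of entries in $\mathcal I$.

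The main obstacle is the combinatorial and linear-algebraic verification that conditions (i) and (ii) above can always be met simultaneously. Whether $(\mathbf j,\mathbf l_s)\in\mathcal I$ holds depends on the multiplicity pattern of equal states in $\mathbf j$ (no repeat, one pair, one triple, and so on), and the hypothesis $n\ge 3$ provides just enough freedom in the $n$-subpatterns to supply matching states or pairs inside $\mathbf l_s$ as required. Simultaneously preserving non-singularity of the $k\times k$ submatrix, while working with whatever parameters the model happens to take, is where most of the work lies; I expect this to be handled by a short case analysis over the shape of $\mathbf j$, possibly using different $(m,n)$ splits of $T$ for different shapes, and with a bootstrapping pass in which certain previously recovered $q_\mathbf i\notin\mathcal I$ are used in recovering others.
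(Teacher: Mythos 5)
Your opening moves coincide with the paper's: pass to a binary resolution, take the $(m,n-m)$ split with $m\ge 4$, $n-m\ge 3$ supplied by the preceding lemma, flatten along that edge, observe that the flattening of $P$ has rank $k$ so all $(k+1)\times(k+1)$ minors vanish, note that homogeneity transfers this to the $q_{\mathbf i}$, and then solve for each unknown entry by cofactor expansion of a minor containing exactly one unknown. That skeleton is exactly the paper's. The problem is that everything you defer --- ``a mild point,'' ``a short case analysis,'' ``where most of the work lies'' --- is precisely the content of the lemma, and at least one of your deferred claims is not mild.

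Concretely: you propose to fix $k$ rows indexed by subpatterns that are parsimony-informative \emph{on $S_1$ alone} (so that entire rows are known), and you assert that such rows span the row space of the flattening. Writing the flattening as $A^T\diag(\boldsymbol\pi)B$ with $A(r,\mathbf j)$ the conditional probability of subpattern $\mathbf j$ given root state $r$, this spanning claim says that the vectors $A(\cdot,\mathbf j)$ for parsimony-informative $\mathbf j$ span $\mathbb R^k$ \emph{for every} allowed choice of parameters. Appendix \ref{sec:5taxa} is a warning that claims of exactly this type can fail on positive-codimension parameter sets (the rank drop of $C_1$ in Lemma \ref{lem:badp} is the $m=2$ analogue), and the whole point of Lemma \ref{lem:large}, as opposed to a generic-parameters statement, is to rule out all such exceptional loci. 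You give no argument for the spanning, and your second requirement --- choosing columns $\mathbf l_1,\dots,\mathbf l_k$ that simultaneously keep $(\mathbf j,\mathbf l_s)\in\mathcal I$ and keep the $k\times k$ submatrix non-singular --- is an independent non-degeneracy assertion of the same kind, again unproved. The paper avoids both issues by a different device: for each unknown $q_{\mathbf i}$ it chooses rows $\mathbf j_i=(a,\dots,a,b,i)$ and columns $\mathbf k_i=(a,\dots,a,b,i)$ that vary only in the state at a cherry leaf, so that (a) each \emph{pair} $(\mathbf j_i,\mathbf k_j)$ is parsimony-informative because the two sides jointly contribute two repeated states, and (b) the $k\times k$ matrix $L=C_1^T\diag(\boldsymbol\pi)C_2$ factors as a product of strictly positive diagonal matrices and the non-singular Markov matrices along the path to that cherry, hence is non-singular for \emph{all} parameters satisfying the standing hypotheses. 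This is then organized into six cases by the repetition structure of $\mathbf i$, with the constant and all-distinct patterns recovered last using previously determined values. Until you either prove your spanning claim for all parameters or replace it with a construction of this explicit kind, the proof is incomplete at its central step.
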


\begin{proof} We may assume $T$ is binary by passing to a binary resolution of it, noting that
the probability distributions arising from the model on the unresolved tree also arise from
the model on the resolved tree by setting Markov matrices on new edges to the identity matrix.

Let $e$ denote some edge of $T$ corresponding to an $(m,n-m)$ split with $m\ge 4$, $n-m\ge 3$.

Recall, the more general version of Theorem \ref{thm:2inv}  
for \GMk on $n$-taxon binary trees \citep{ARgm}: If $P$ denotes the $n$-dimensional $k\times k\times\cdots \times k$
joint distribution tensor with entries $p_{\mathbf i}$, where $\mathbf i$ denotes a pattern, let
$F_e$ be the matrix obtained by flattening
$P$ along $e$. Then all $(k+1)\times(k+1)$ minors of $F_e$ are zero.

Replacing each $p_\mathbf i$ in $F_e$ by $q_\mathbf i$ to obtain a matrix $\widetilde F_e$ preserves the vanishing of these minors, due to the homogeneity of determinants.

For each parsimony-noninformative pattern $\mathbf i\notin \mathcal I$, we will produce a $(k+1)\times(k+1)$ submatrix of $\widetilde F_e$ that involves $ q_\mathbf i$ but no other unknown $ q_\mathbf j$. We will furthermore ensure that the $k\times k$ minor of this submatrix that uses rows and columns complementary to those of $ q_\mathbf i$ is non-zero. Then the vanishing of the $(k+1)\times(k+1)$ 
determinant
leads to a formula for $ q_\mathbf i$ in terms of known $q_\mathbf j$, as in Section \ref{sec:5taxa}. Thus we
may recover all unknown values of $ q_i$ $\mathbf i\notin S$.

To produce these $(k+1)\times(k+1)$  submatrices, we must fix additional notation. With $e$ the fixed edge described above,
we may assume our taxa are labeled so that the partition of taxa induced by removing $e$ has sets $S_1=\{a_1,\dots, a_m\}$ and  $S_2=\{a_{m+1},\dots,a_{n}\}$, so $\widetilde F_e$ has rows indexed by $[k]^m$ and columns by $[k]^{n-m}$. We may further assume taxa $a_{m-1}$ and $a_m$ form a cherry, as do $a_{n-1}$ and $a_{n}$, and the other taxa in $S_1$ are numbered in a manner consistent with the diagram of the subtree of $T$ shown in Figure \ref{fig:halftree}, and similarly for those taxa in $S_2$. Thus taxa are numbered in order of where the path
from the deleted edge $e$ to the taxa leaves the path from the deleted edge to $a_m$ (respectively $a_n$).

\begin{figure}[h]
\begin{center}
\includegraphics[height=1in]{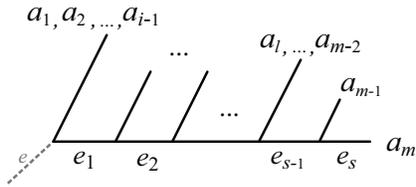}
\caption{Assumed ordering of taxa in the subtree of $T$ to one side of $e$.  }\label{fig:halftree}
\end{center}
\end{figure}

For any pattern $\mathbf i\in[k]^{n}$, let $\mathbf i_1=\proj_{S_1}(\mathbf i)\in[k]^m$ and
$\mathbf i_2=\proj_{S_2}(\mathbf i)\in[k]^{n-m}$. 

The values of $q_\mathbf i$ are known if $\mathbf i$ has among its components at least 2 states that appear at least twice each. 
In cases 1-4 below, we will use these to first determine
those $q_\mathbf i$ for which $\mathbf i$ has exactly one component that appears at least twice, but
$\mathbf i$ is not a constant pattern. 
Without loss of generality, we may assume the component that
appears at least twice in $\mathbf i$ is 1,  yet $\mathbf i\ne (1,1,\dots,1).$

\smallskip
\noindent
\emph{Case 1: No $1$ appears in $\mathbf i_1$, so at least two $1$s appear in $\mathbf i_2$.} All components of $\mathbf i_1$ must be distinct, so let $a\ne b$ be two of these.
Consider the row indices
$$\mathbf i_1,\text{ and for each $i\in[k]$, } \mathbf j_i=(a,a,\dots,a,b,i), $$
and the column indices
$$\mathbf i_2,\text{ and for each $i\in[k]$, }  \mathbf k_i=(a,a,\dots,a,b,i).$$
Then the $(k+1)\times(k+1)$ submatrix of $\widetilde F_e$ formed by these rows and columns has all known entries
except $q_\mathbf i$.  

We further claim the $k\times k$ submatrix $L$ with entries $q_{(\mathbf j_i,\mathbf k_j)}$, $i,j\in[k]$ has non-zero determinant. To see this, note that by viewing the tree $T$ as rooted at the end of $e$ closest to taxon $a_1$, $L$ has a matrix factorization
\begin{equation}
L=C_1^T\diag(\boldsymbol \pi)C_2,\label{eq:factk}
\end{equation}
where the entries of $C_1$ give probabilities
of producing the patterns $\mathbf j_i$ at the taxa in $S_1$ conditioned on the root state, and the entries of $C_2$
similarly give conditional probabilities of producing the patterns $\mathbf k_i$ at the taxa in $S_2$.
Referring to Figure \ref{fig:halftree}, we find
$$C_1=D_1 M_{e_1}D_2\dots D_{s-1}M_{e_{s-1}}D_sM_{e_s},$$
where each $D_i$ is a diagonal matrix whose entries give the probabilities of states
at the $i$th node along the path from the root to $m$ producing the particular pattern $\proj_{B_i}(a,\dots a,a,b)$ on the taxa in the set $B_i$ labeling
the leaves on the subtree branching off from that node. By our assumptions on parameters, all matrices in this product are non-singular, so $C_1$ is as well. A similar product shows $C_2$ is also non-singular, so by equation (\ref{eq:factk}) the matrix has non-zero determinant as claimed.

\smallskip
\noindent
\emph{Case 2: Exactly one $1$ appears in $\mathbf i_1$, so at least one $1$ appears in $\mathbf i_2$.} Again all components of $\mathbf i_1$ must be distinct, so let $a\ne 1$ be one of these. 

Then considering  the row indices
$$\mathbf i_1,\text{ and for each $i\in[k]$, }  \mathbf j_i=(1,1,\dots,1,a,a,i),  $$
and the column indices
$$\mathbf i_2,\text{ and for each $i\in[k]$, } \mathbf k_i=(1,1,\dots,1,a,i),$$
we obtain the needed submatrix.

\smallskip
\noindent
\emph{Case 3: At least two $1$s appear in $\mathbf i_1$,  and
$\mathbf i_2$ has at least one component $a\ne 1$.}
Let $b\ne a $ denote any other component of $\mathbf i_2$ (so $b=1$ is possible).
Then considering  the row indices
$$\mathbf i_1,\text{ and for each $i\in[k]$, } \mathbf j_i=(b,b,\dots,b,a,i),$$
and the column indices
$$\mathbf i_2,\text{ and for each $i\in[k]$, }   \mathbf k_i=(a,a,\dots,a,i),$$
we obtain the needed submatrix.

\smallskip
\noindent
\emph{Case 4: At least two $1$s appear in $\mathbf i_1$,  and
$\mathbf i_2$ has all components $1$.}
Since we are assuming $\mathbf i$ is not constant, $\mathbf i_1$ must have some component $a\ne 1$.
Then considering  the row indices
$$\mathbf i_1,\text{ and for each $i\in[k]$, }   \mathbf j_i=(1,1,\dots,1,a,a,i), $$
and the column indices
$$\mathbf i_2,\text{ and for each $i\in[k]$, }   \mathbf k_i=(1,1,\dots,1,a,i), $$
we obtain the needed submatrix.

\smallskip

At this point all  $q_\mathbf i$ for all non-constant patterns $\mathbf i$ with at least
one repeated component are known. We next use these to determine $q_\mathbf i$ for a constant pattern $\mathbf i$, which we may assume is all $1$s.

\smallskip
\noindent
\emph{Case 5: All components of  $\mathbf i_1$  and
$\mathbf i_2$ are $1$s.}
Considering  the row indices
$$\mathbf i_1,\text{ and for each $i\in[k]$, }  \mathbf j_i=(1,1,\dots,1,2,i),$$
and the column indices
$$\mathbf i_2,\text{ and for each $i\in[k]$, }   \mathbf k_i=(1,1,\dots,1,2,i),$$
we obtain a submatrix all of whose entries except $q_\mathbf i$ are already known.
The non-singularity of the relevant $k\times k$ minor is again shown as in Case 1.

\smallskip

A final case shows we can determine the remaining $q_\mathbf i$, which have no repeated components

\smallskip
\noindent
\emph{Case 6: No components of  $\mathbf i$  are repeated.}
Considering  the row indices
$$\mathbf i_1,\text{ and for each $i\in[k]$, }  \mathbf j_i=(1,1,\dots,1,i),$$
and the column indices
$$\mathbf i_2,\text{ and for each $i\in[k]$, }   \mathbf k_i=(1,1,\dots,1,i),$$
we obtain a submatrix all of whose entries except $q_\mathbf i$ are already known, whose
relevant $k\times k$ minor is similarly shown to be non-singular .
\end{proof}

\bibliographystyle{elsarticle-harv}

\bibliography{mkconsist}

\begin{thebibliography}{36}
\expandafter\ifx\csname natexlab\endcsname\relax\def\natexlab#1{#1}\fi
\expandafter\ifx\csname url\endcsname\relax
  \def\url#1{\texttt{#1}}\fi
\expandafter\ifx\csname urlprefix\endcsname\relax\def\urlprefix{URL }\fi

\bibitem[{Allman and Rhodes(2008{\natexlab{a}})}]{AllmanR2008}
Allman, E., Rhodes, J., Jan 2008{\natexlab{a}}. Identifying evolutionary trees
  and substitution parameters for the general {M}arkov model with invariable
  sites. Mathematical Biosciences 211~(1), 18--33.
\newline\urlprefix\url{http://linkinghub.elsevier.com/retrieve/pii/S0025556407%
001897}

\bibitem[{Allman et~al.(2009)Allman, Holder, and Rhodes}]{AHRwebsite}
Allman, E.~S., Holder, M.~T., Rhodes, J.~A., 2009. Supplementary material,
  {M}aple worksheet. {\tt http://www.dms.uaf.edu/\~\,jrhodes/papers/AHRsup.mw}.

\bibitem[{Allman and Rhodes(2003)}]{AR03}
Allman, E.~S., Rhodes, J.~A., 2003. Phylogenetic invariants for the general
  {M}arkov model of sequence mutation. Math. Biosci. 186, 113--144.

\bibitem[{Allman and Rhodes(2007)}]{ARnme}
Allman, E.~S., Rhodes, J.~A., 2007. Phylogenetic invariants. In: Gascuel, O.,
  Steel, M. (Eds.), Reconstructing Evolution: New Mathematical and
  Computational Advances. Oxford University Press, Oxford, pp. 108--147.

\bibitem[{Allman and Rhodes(2008{\natexlab{b}})}]{ARgm}
Allman, E.~S., Rhodes, J.~A., 2008{\natexlab{b}}. Phylogenetic ideals and
  varieties for the general {M}arkov model. Adv. in Appl. Math. 40~(2),
  127--148, {\tt arXiv:math.AG/0410604}.

\bibitem[{Buneman(1971)}]{Bun}
Buneman, P., 1971. The recovery of trees from measures of dissimilarity. In:
  Mathematics in the Archeological and Historical Sciences. Edinburgh
  University Press, Edinburgh, pp. 387--395.

\bibitem[{Cavender(1978)}]{Cavender1978}
Cavender, J.~A., 1978. Taxonomy with confidence. Mathematical Biosciences 40,
  271--280.

\bibitem[{Cavender and Felsenstein(1987)}]{CF87}
Cavender, J.~A., Felsenstein, J., 1987. Invariants of phylogenies in a simple
  case with discrete states. J. of Class. 4, 57--71.

\bibitem[{Chang(1996)}]{MR97k:92011}
Chang, J.~T., 1996. Full reconstruction of {M}arkov models on evolutionary
  trees: identifiability and consistency. Math. Biosci. 137~(1), 51--73.

\bibitem[{Cs\H{u}r\"os et~al.(2007)Cs\H{u}r\"os, Holy, and
  Rogozin}]{Csuros2007}
Cs\H{u}r\"os, M., Holy, J.~A., Rogozin, I.~B., 2007. In search of lost introns.
  Bioinformatics 23, i87--i96.

\bibitem[{Diallo et~al.(2007)Diallo, Makarenkov, and Blanchette}]{Diallo2007}
Diallo, A.~B., Makarenkov, V., Blanchette, M., 2007. Exact and heuristic
  algorithms for the indel maximum likelihood problem. Journal of Computational
  Biology 14~(4), 446--461.

\bibitem[{Farris(1973)}]{Farris1973}
Farris, J.~S., 1973. A probability model for inferring evolutionary trees.
  Systematic Zoology 22, 250--256.

\bibitem[{Felsenstein(1978)}]{Felsenstein1978b}
Felsenstein, J., 1978. Cases in which parsimony or compatibility methods will
  be positively misleading. Systematic Zoology 27, 401--410.

\bibitem[{Felsenstein(1992)}]{Felsenstein1992}
Felsenstein, J., Jan 1992. Phylogenies from restriction sites: a
  maximum-likelihood approach. Evolution 46, 159--173.
\newline\urlprefix\url{http://cat.inist.fr/?aModele=afficheN&cpsidt=5208269}

\bibitem[{Hennig(1966)}]{Hennig1966}
Hennig, W., 1966. Phylogenetic Systematics. University of Illinois Press.

\bibitem[{Jukes and Cantor(1969)}]{JukesC1969}
Jukes, T.~H., Cantor, C.~R., 1969. Evolution of protein molecules. In: Munro,
  H. (Ed.), Mammalian protein metabolism. Academic Press, New York, pp.
  21--132.

\bibitem[{Lewis(2001)}]{Lewis2001}
Lewis, P.~O., 2001. A likelihood approach to estimating phylogeny from discrete
  morphological character data. Systematic Biology 50~(6), 913--925.

\bibitem[{Massey(1991)}]{Massey}
Massey, W.~S., 1991. A basic course in algebraic topology. Vol. 127 of Graduate
  Texts in Mathematics. Springer-Verlag, New York.

\bibitem[{Mayr et~al.(1953)Mayr, Linsley, and Usinger}]{MayrLU1953}
Mayr, E., Linsley, E.~G., Usinger, R.~L., 1953. Methods and Principles of
  Systematic Zoology. McGraw-Hill.

\bibitem[{Neyman(1971)}]{Neyman}
Neyman, J., 1971. Molecular studies of evolution: A source of novel statistical
  problems. In: Gupta, S., Yackel, J. (Eds.), Statistical Decision Theory and
  Related Topics. Academic Press, New York, pp. 1--27.

\bibitem[{Nylander et~al.(2004)Nylander, Ronquist, Huelsenbeck, and
  Nieves-Aldrey}]{NylanderRHN2004}
Nylander, J. A.~A., Ronquist, F., Huelsenbeck, J.~P., Nieves-Aldrey, J.~L.,
  2004. {B}ayesian phylogenetic analysis of combined data. Systematic Biology
  53~(1), 47--67.

\bibitem[{Poe and Wiens(2000)}]{PoeW2000}
Poe, S., Wiens, J.~J., 2000. Character selection and the methodology of
  morphological phylogenetics. In: Wiens, J.~J. (Ed.), Phylogenetic analysis of
  morphological data. Smithsonian Institution Press, Washington, D.C., pp.
  20--36.

\bibitem[{Ramirez(2007)}]{Ramirez2007}
Ramirez, M., 2007. Homology as a parsimony problem: a dynamic homology approach
  for morphological data. Cladistics 23, 588--612.
\newline\urlprefix\url{http://www.blackwell-synergy.com/doi/abs/10.1111/j.1096%
-0031.2007.00162.x}

\bibitem[{Rannala(2002)}]{Rannala2002}
Rannala, B., 2002. Identifiability of parameters in {MCMC} {B}ayesian inference
  of phylogeny. Syst. Biol. 51~(5), 754--760.

\bibitem[{Rieppel and Kearney(2002)}]{RieppelK2002}
Rieppel, O., Kearney, M., Jan 2002. Similarity. Biological Journal of the
  Linnean Society 75, 59--82.
\newline\urlprefix\url{http://www.blackwell-synergy.com/doi/abs/10.1046/j.1095%
-8312.2002.00006.x}

\bibitem[{Rieppel and Kearney(2007)}]{RieppelK2007}
Rieppel, O., Kearney, M., Mar 2007. The poverty of taxonomic characters.
  Biology \& Philosophy 22~(1), 95--113.
\newline\urlprefix\url{http://www.springerlink.com/content/h1045281150h849p/}

\bibitem[{Ronquist and Huelsenbeck(2003)}]{RonquistH2003}
Ronquist, F.~R., Huelsenbeck, J.~P., 2003. {MRBAYES} 3: {B}ayesian phylogenetic
  inference under mixed models. Bioinformatics 19~(12), 1574--1575.

\bibitem[{Schulmeister(2004)}]{Schulmeister2004}
Schulmeister, S., Aug 2004. Inconsistency of maximum parsimony revisited.
  Systematic Biology 53~(4), 512--528.
\newline\urlprefix\url{http://www.jstor.org/stable/4135421}

\bibitem[{Semple and Steel(2003)}]{MR2060009}
Semple, C., Steel, M., 2003. Phylogenetics. Vol.~24 of Oxford Lecture Series in
  Mathematics and its Applications. Oxford University Press, Oxford.

\bibitem[{Sereno(2007)}]{Sereno2007}
Sereno, P., 2007. Logical basis for morphological characters in phylogenetics.
  Cladistics 23, 565--587.
\newline\urlprefix\url{http://www.blackwell-synergy.com/doi/abs/10.1111/j.1096%
-0031.2007.00161.x}

\bibitem[{Steel(1994)}]{Steel1994}
Steel, M., Jan 1994. Recovering a tree from the leaf colourations it generates
  under a {M}arkov model. Appl. Math. Letters 7~(2), 19--23.
\newline\urlprefix\url{http://www.math.canterbury.ac.nz/~mathmas/research/mark%
ov3.pdf}

\bibitem[{Steel et~al.(1994)Steel, Sz\'ekely, and Hendy}]{SSH94}
Steel, M., Sz\'ekely, L., Hendy, M., 1994. Reconstructing trees from sequences
  whose sites evolve at variable rates. J. Comput. Biol. 1~(2), 153--163.

\bibitem[{Steel et~al.(1993)Steel, Hendy, and Penny}]{ParsCons}
Steel, M.~A., Hendy, M.~D., Penny, D., 1993. Parsimony can be consistent! Sys.
  Biol. 42~(4), 581--587.

\bibitem[{Thorne et~al.(1991)Thorne, Kishino, and Felsenstein}]{Thorne1991}
Thorne, J.~L., Kishino, H., Felsenstein, J., 1991. An evolutionary model for
  the maximum likelihood alignment of {DNA} sequences. J. Mol. Evol. 33,
  114--124.

\bibitem[{Wald(1949)}]{Wald49}
Wald, A., 1949. Note on the consistency of the maximum likelihood estimate.
  Ann. Math. Statistics 20, 595--601.

\bibitem[{Wiley(2008)}]{Wiley2008}
Wiley, E.~O., 2008. Homology, identity and transformation. In: Arratia, G.,
  Schultze, H.-P., Wilson, M. V.~H. (Eds.), Mesozoic Fishes 4 - Homology and
  Phylogeny. Verlag Dr. Friedrich Pfiel, M\"unchen, pp. 9--21.

\end{thebibliography}
\end{document}